\newtcolorbox[auto counter]{Box1}[2][]{
                lower separated=false,
                colback=white,
                breakable,pad at break*=0mm,vfill before first,
colframe=white!20!gray,fonttitle=\bfseries,
colbacktitle=white!10!gray,enhanced,
attach boxed title to top left={xshift=1cm,
        yshift=-2mm},
title=#2,#1}
\newtheorem{theorem}{Theorem}
\newtheorem{definition}{Definition}
\newtheorem{lemma}{Lemma}
\newtheorem{corollary}{Corollary}
\begin{document}
\title{A Two-Layer Blockchain Sharding\\ Protocol Leveraging Safety and Liveness\\ for Enhanced Performance}
\makeatletter
\newcommand{\linebreakand}{%
  \end{@IEEEauthorhalign}
  \hfill\mbox{}\par
  \mbox{}\hfill\begin{@IEEEauthorhalign}
}
\makeatother

\author{
  \IEEEauthorblockN{Yibin Xu}
  \IEEEauthorblockA{
    Department of Computer Science \\
    University of Copenhagen\\
    yx@di.ku.dk
  }
  \and
  \IEEEauthorblockN{Jingyi Zheng}
  \IEEEauthorblockA{
    Department of Computer Science\\
    University of Copenhagen\\
    jrb385@alumni.ku.dk
  }
  \and
  \IEEEauthorblockN{Boris D{\"u}dder}
  \IEEEauthorblockA{
    Department of Computer Science\\
    University of Copenhagen\\
    boris.d@di.ku.dk
  }
  \linebreakand
  \IEEEauthorblockN{Tijs Slaats}
  \IEEEauthorblockA{
    Department of Computer Science\\
    University of Copenhagen\\
    slaats@di.ku.dk
  }
  \and
  \IEEEauthorblockN{Yongluan Zhou}
  \IEEEauthorblockA{
    Department of Computer Science\\
    University of Copenhagen\\
    zhou@di.ku.dk
  }
}

\IEEEoverridecommandlockouts
\makeatletter\def\@IEEEpubidpullup{6.5\baselineskip}\makeatother
\IEEEpubid{\parbox{\columnwidth}{
    Network and Distributed System Security (NDSS) Symposium 2024\\
    26 February - 1 March 2024, San Diego, CA, USA\\
    ISBN 1-891562-93-2\\
    https://dx.doi.org/10.14722/ndss.2024.24006\\
    www.ndss-symposium.org
}
\hspace{\columnsep}\makebox[\columnwidth]{}}

\maketitle

\begin{abstract}

Sharding is a critical technique that enhances the scalability of blockchain technology. However, existing protocols often assume adversarial nodes in a general term without considering the different types of attacks, which limits transaction throughput at runtime because attacks on liveness could be mitigated. There have been attempts to increase transaction throughput by separately handling the attacks; however, they have security vulnerabilities. This paper introduces Reticulum, a novel sharding protocol that overcomes these limitations and achieves enhanced scalability in a blockchain network without security vulnerabilities.

Reticulum employs a two-phase design that dynamically adjusts transaction throughput based on runtime adversarial attacks on either or both liveness and safety. It consists of `control' and `process' shards in two layers corresponding to the two phases. Process shards are subsets of control shards, with each process shard expected to contain at least one honest node with high confidence. Conversely, control shards are expected to have a majority of honest nodes with high confidence. Reticulum leverages unanimous voting in the first phase to involve fewer nodes in accepting/rejecting a block, allowing more parallel process shards. The control shard finalizes the decision made in the first phase and serves as a lifeline to resolve disputes when they surface.

Experiments demonstrate that the unique design of Reticulum empowers high transaction throughput and robustness in the face of different types of attacks in the network, making it superior to existing sharding protocols for blockchain networks.


\end{abstract}
\section{Introduction}
\label{introduction}
Blockchain sharding~\cite{luu2016secure,zamani2018rapidchain,kokoris2018omniledger,dang2019towards,zheng2021meepo,david2022Gearbox,hong2021pyramid,lewenberg2015inclusive,wang2019monoxide,huang2020repchain,khacef2022trade,kiayias2017ouroboros,10.1145/3548606.3563506,xu2023two} is a method that aims to improve the scalability of a vote-based blockchain system by randomly dividing the network into smaller divisions, called shards. The idea is to increase parallelism and reduce the overhead in the consensus process of each shard, thereby increasing efficiency. However, there is a trade-off between parallelism and security in sharding. While making smaller and more shards can increase parallelism, they are more vulnerable to corruption.

The security of a shard has two key properties: liveness and safety. Liveness concerns the shard's capability to ultimately achieve a consensus on the sequence of output messages, whereas safety pertains to the accuracy and exclusivity of that consensus. For example, if we have $M=10$ nodes in a synchronous communication shard, a decision is made only if at least seven nodes vote in favor. Then, $S=6/M$ is the maximum ratio of adversarial population that can be tolerated. Otherwise, an adversarial decision may be reached. However, if at least a $1-S$ ratio of nodes in the shard always disagree or keep silent with any proposals, the shard can never reach any decisions, and the liveness is compromised. The safety threshold ($S$) and liveness threshold ($L$) represent the maximum ratio of adversarial participants in a shard under which safety and liveness are guaranteed.

However, existing sharding solutions~\cite{zamani2018rapidchain} guarantee security under worst-case adversarial conditions ($L= S<50\%$) in the synchronous network and assume that the adversary has equal interests in attacking both safety and liveness simultaneously. It limits the number of parallel shards and the performance gains of sharding when the nodes acting adversarial at runtime is lower than the worst-case. 

Recent proposals~\cite{david2022Gearbox,xu2020n,xu2020flexible} aim to increase the number of parallel shards by increasing $S$ and decreasing $L$ of each shard. This is based on the fact that $S<1-L$ (synchronous model) or $S<1-2L$ (partially synchronous model) holds because given $L$ percent of nodes that attack liveness, a block will only be accepted if at least $1-L$ or $1-2L$ percent of nodes respectively have voted in favor of it. Therefore, the system cannot tolerate more than $1-L$ or $1-2L$ percent of nodes, respectively, that attack safety. 
Based on this theory, one can generate smaller shards with higher $S$ (and lower $L$) to tolerate the same number of adversaries globally. Note that $S\geq L$ is maintained in all cases.
Fig.~\ref{fig:new1} illustrates changing $L$ and $S$ by adapting the shard size proposed in~\cite{david2022Gearbox,xu2020flexible}. This design allows using small shards when the adversarial population ratio that intends to attack liveness is low, so more shards run in parallel. When this ratio is higher at runtime, the shards can be respawned with a larger shard size, resulting in an increased $L$, and a decreased $S$~\cite{david2022Gearbox,xu2020flexible}. The existing shard resizing approaches either reform the shards from scratch, changing the shard memberships of all the nodes~\cite{xu2020flexible}, or perform local shard adjustments, resulting in overlapping shards~\cite{david2022Gearbox} as shown in Fig.~\ref{fig:new1}.

\begin{figure}[h!]
    \centering
    \includegraphics[width=0.5\textwidth]{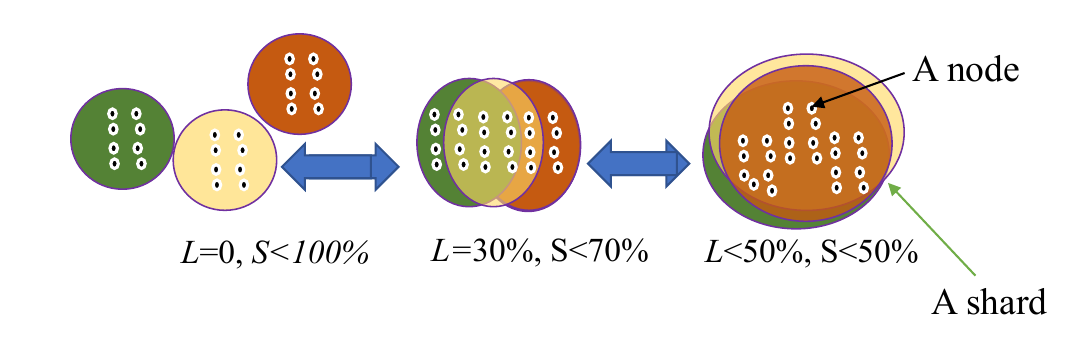}
    \caption{The typical structure of a sharding approach leveraging liveness and safety thresholds. The size of shards increases with the liveness threshold $L$ increase. There are three shards in the system. The increased size either reduces the number of shards or causes shards to overlap.}
    \label{fig:new1}
\end{figure}

Despite the potential performance gains associated with leveraging liveness and safety, several key limitations must be addressed. \textit{First}, the approaches of adapting shard sizes and memberships, as mentioned earlier, can result in frequent and costly shard respawning. Additionally, the presence of overlapping shards when the adversarial population fluctuates leads to additional costs. 
\textit{Secondly}, our analysis in Sec.~\ref{breaches} uncovers security vulnerabilities present in all existing designs. In particular, synchronous shards with $S \geq \frac{1}{2}$ or partially synchronous shards with $S \geq \frac{1}{3}$ are unable to achieve consensus without the possibility of equivocation. These vulnerabilities undermine the effectiveness of the approaches.
\textit{Thirdly},  all approaches so far have no design to mitigate the liveness attacks. Such attacks can be detected and evidenced in runtime, allowing for appropriate punitive actions when liveness is restored while ensuring safety is maintained at all times. 
For instance, if there are methods to expel a node when this node disagrees with correct proposals or frequently remains silent, such adversarial nodes cannot again attack liveness in the future after being expelled.

\subsection{Contribution}
The contributions of this paper are as follows:

\noindent \textbf{\textit{(1) We propose Reticulum, the first protocol leveraging liveness and safety but does not suffer from security breaches. Reticulum can inhibit adversarial behaviours.}} These are achieved without needing to respawn new shards in runtime or to overlap shard memberships, which can bring unnecessary overhead.

\noindent \textbf{\textit{(2) We comprehensively analyze Reticulum's performance with a comparison with state-of-the-art approaches.}} We evaluate Reticulum's performance empirically by both simulation and experiments. We compare Reticulum with two state-of-the-art approaches: Gearbox~\cite{david2022Gearbox} and Rapidchain~\cite{zamani2018rapidchain}. Our analysis shows that Reticulum significantly outperforms both approaches regarding transaction throughput and storage requirements. 


\section{Overview}
\label{1.2}

To address the shortcomings of existing approaches, we propose Reticulum, a blockchain sharding approach designed for a synchronous environment. For simplification, Reticulum uses a static sharding scheme, i.e., no new members are added in runtime and we do not respawn any shards. But in reality, the same as Rapidchain~\cite{zamani2018rapidchain} and Omniledger~\cite{kokoris2018omniledger}, it may allow a system reboot with fresh nodes added and some nodes removed every several days using the same design. It may also use the same design as Rapidchain to swap nodes of different shards after several epochs to avoid adversaries (slowly) enlarging the corruption population in a shard (assuming they are capable of doing so). Reticulum adopts two layers of ledgers to provide resilience. Fig.~\ref{fig:struct} illustrates its structure. 

\begin{figure}[h!]
    \centering
    \includegraphics[width=0.5\textwidth]{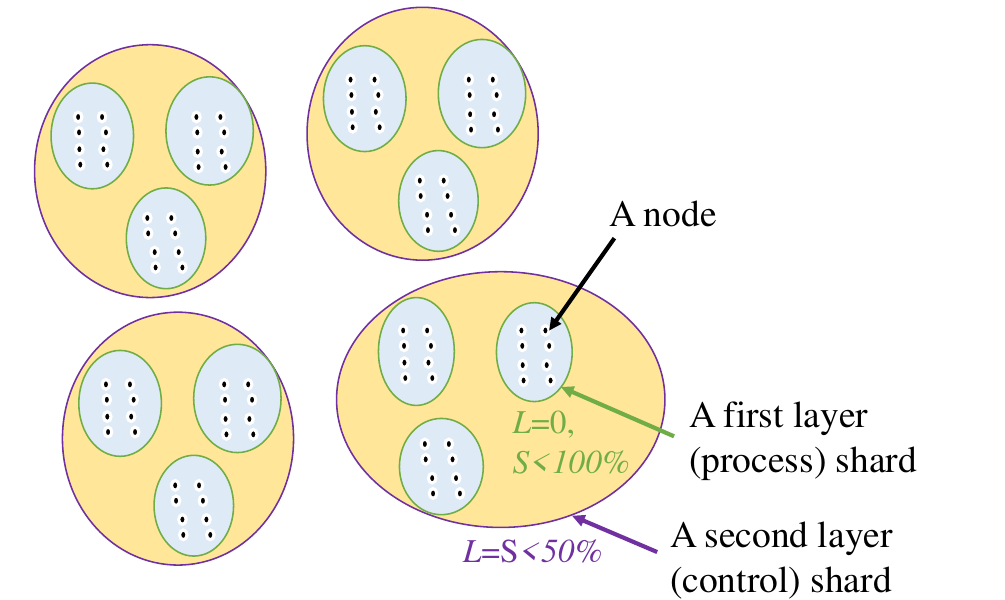}
    \caption{The structure overview of Reticulum protocol. Every node is, at the same time, in only one process shard ($L=0, S<100\%$) and one control shard ($(L=S)<50\%$) that governs this process shard.}
    \label{fig:struct}
\end{figure}

Reticulum employs a two-phase-voting mechanism in which a blockchain epoch is divided into two phases. Each node is assigned to one and only one process shard, which consists of the first layer of ledgers. Each process shard is assigned to and governed by one and only one control shard, which consists of the second layer of ledgers.
The shard memberships do not change at runtime and they do not overlap. 
Moreover, all nodes are connected to a public communication chain, which is used for communicating metadata.

In the first phase, the members of a process shard generate and vote for the acceptance of a process block containing transactions associated with the data stored. All process shards have $S<100\%$, which means that a block will only be accepted with a unanimous vote. 
On the other hand, the process shards have $L=0\%$, which means that 
one adversary would be sufficient to hinder the progress.  

In order to prevent progress hindrance caused by an acceptable block failing to receive a unanimous verdict in a process shard, a control shard is used in the second phase to reach a majority verdict on the block that did not pass in the first phase. Each control shard has a threshold of $L, S<50\%$, which ensures that it provides comparable worst-case security to other state-of-the-art blockchain sharding protocols. The votes in a \textit{process} shard are broadcast to all nodes in the corresponding \textit{control} shard (sized $N_c$) using a Byzantine reliable broadcast protocol called $(\Delta+\delta)$-BRB~\cite{abraham2021good} with $f\leq \lfloor (N_c-1)/2\rfloor$, this enables them to align that if a process block needs to be handled in the control shard. 

We design that when a process block cannot be accepted in a process shard, but later be accepted by the control shard, all nodes in the process shard which voted for rejection will be expelled from the system. If a node remained silent in voting, it is marked as ``violated node'' in the current voting. A node can only be marked as ``violated node'' once in every $\tau$ votings for the process blocks, where $\tau$ is a pre-defined liveness threshold. If node $i$ is marked twice, it is also expelled from the system. 

It is important to note that the cost of the second voting phase will depend on the number of failed process blocks in the first phase.  As the adversarial population decreases (or increases), fewer (or more) process blocks will likely fail the first voting phase, which will result in a lower (or higher) cost for the second phase and higher (or lower) transaction throughput. When no adversaries are attacking liveness, Reticulum will achieve its highest transaction throughput because it has no failed process block to vote for in the second phase. In this way, Reticulum is resilient to the dynamic decrease (or increase) of the adversarial population with higher (or lower) transaction throughput.

\section{Problem domain}
This section defines the network and threat models for our sharding protocol, the overview of Reticulum, and the objectives of Reticulum. 

\subsection{System model}\label{s2a}
\noindent \textbf{\textit{Network model: standard synchronous network model.}} Similar to Rapidchain~\cite{zamani2018rapidchain}, our network model consists of a peer-to-peer network with $N$ nodes that use a Sybil-resistant identity generation mechanism to establish their identities (i.e., public/private keys). This mechanism requires nodes to attach a Proof-of-Stake (PoS) of a threshold weight that all other honest nodes can verify. We assume that all messages sent in the network are authenticated with the sender's private key. These messages are propagated through a synchronous gossip protocol~\cite{karp2000randomized}, which guarantees that the message delays are at most a known upper bound $\Delta$. Note that this protocol does not necessarily preserve the order of the messages. 

It is worth noting that Gearbox~\cite{david2022Gearbox} claims to utilize a mixed network model based on $L$ and $S$, enabling the execution of a partially synchronous model until $L=25\%$ and $S=49\%$. However, as will be demonstrated in Sec.~\ref{breaches}, in practice, the decision to accept or reject blocks of a shard in Gearbox can only be safely reached after finalization. Therefore, it is the confirmation of the heartbeat within a block of the ``control chain'' that signifies the completion of the BFT process of each shard. Since the blocks of the ``control chain'' also incorporate heartbeats from other shards, some of which may operate with $L>25\%$ at the time, in many cases, the block is generated only after a known time-bound, which implies a synchronous network model for Gearbox.

\noindent \textbf{\textit{Threat model: adaptive but upper-bounded adversaries with  $\tau$ liveness guarantee.}} In our threat model, we consider the possibility of nodes disconnecting from the network due to internal failures or network jitter, as well as a probabilistic polynomial-time Byzantine adversary that can corrupt up to $f \leq \lfloor (N-1)/3\rfloor $ nodes at any given time. These adversarial nodes may collude with each other and deviate from the protocol in arbitrary ways, such as by sending invalid or inconsistent messages or remaining silent. Note that a non-sharded synchronous communication environment can tolerate $f \leq \lfloor (N-1)/2 \rfloor$. However, for a sharded network with shard size $M$, it is impossible to withstand $f \leq \lfloor (M-1)/2 \rfloor$ adversary (the mere majority honesty), while allowing $f \leq \lfloor (N-1)/2 \rfloor$ globally. This is because the failure probability (more than the threshold adversary nodes being assigned to the shard) cannot be decreased to a trivial number with a random shard assignment for nodes. 

Rapidchain~\cite{zamani2018rapidchain} and Gearbox~\cite{david2022Gearbox} also require that, at any given moment, at least 2/3 of the computational resources belong to uncorrupted online participants. Since the runtime adversarial population is unknown to the uncorrupted participants, achieving \textit{deterministic} consensus requires constant liveness for the same set of $\lfloor \frac{2(N-1)}{3}\rfloor+1$ honest nodes, as shown in Thm.~\ref{tt1}. \\

\begin{theorem}
For guaranteeing \textbf{constant} liveness and safety in the overall system, a blockchain sharding protocol with a safety threshold $S$ and a liveness threshold $L$ (both less than 50\%) necessitates the presence of at least $\left\lfloor \frac{2(N-1)}{3} \right\rfloor + 1$  uncorrupted participants who remain active in the system \textbf{at all times.} \label{tt1}
\end{theorem}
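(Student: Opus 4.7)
The plan is to derive the bound from two requirements that any fixed-membership voting protocol with an acceptance threshold $T$ (the minimum number of in-favour votes needed to accept a block) must satisfy whenever the runtime split of the corrupted nodes between ``silent'' (liveness-attacking) and ``equivocating'' (safety-attacking) participants is hidden from honest parties. I would model the adversary as choosing, vote by vote, how many of its $f$ nodes stay silent and how many cast conflicting ballots; since honest nodes cannot tell a silent adversary from a slow or offline honest node within a single synchronous round, the protocol must be set to work against every such split up to the worst-case cap $f_{\max}$.

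First, I would establish the safety inequality. Suppose two conflicting proposals $A$ and $B$ are both accepted, each collecting at least $T$ in-favour votes. An honest node contributes to at most one of the two tallies while each adversary can contribute to both, so summing the tallies yields $2T \leq (N-f) + 2f = N + f$. Requiring this configuration to be unreachable gives $2T > N + f$, and because $f$ is unknown at runtime and may equal $f_{\max}$, the fixed threshold must satisfy $T > (N + f_{\max})/2$.

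Next, I would establish the liveness inequality. Let $h$ denote the number of uncorrupted nodes that are guaranteed to be active at every moment. In the worst-case execution for liveness, every adversary stays silent and every honest node outside this always-active set is transiently offline, so only $h$ votes in favour of a correct proposal are reliably present; constant liveness therefore forces $T \leq h$. Combining with safety gives $h \geq T > (N + f_{\max})/2$, i.e.\ $2h > N + f_{\max}$. Substituting the largest adversary budget compatible with only $h$ always-active honest nodes, $f_{\max} = N - h$, produces $2h > N + (N-h)$, hence $3h > 2N$, whose integer form is $h \geq \lceil 2(N-1)/3 \rceil + 1$, matching the theorem.

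The step I expect to be the main obstacle is ruling out that a cleverer synchronous protocol could sidestep the bound by, for instance, lowering $T$ adaptively after observing a timeout, since sync communication ordinarily affords a $1/2$ tolerance. I would address this by showing that any such adaptive rule is itself exploitable: an adaptive adversary can feign silence in one round to force $T$ to shrink and then equivocate in a later round against the reduced threshold, because the protocol cannot distinguish a strategically-silent corrupted node from an honest node that is momentarily unreachable. Formalising this adaptive-corruption reduction, and arguing that it applies uniformly across shards (since each shard inherits the global uncertainty about the runtime adversary count), is the ingredient that converts the synchronous $1/2$ bound into the $1/3$ stated in the theorem.
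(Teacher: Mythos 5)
Your derivation reaches the right constant, but it gets there by a route that does not hold in this paper's setting, and the step you yourself flag as the main obstacle is in fact a genuine gap rather than a technicality. Your safety inequality $2T > N + f_{\max}$ is the quorum-intersection bound for partial synchrony, where a Byzantine node can contribute to two conflicting tallies undetected. The paper's model is synchronous (it explicitly notes that a non-sharded synchronous system tolerates $f \leq \lfloor (N-1)/2 \rfloor$, and that $S < 1-L$ rather than $S < 1-2L$ in synchrony), so the correct global safety/liveness trade-off would only give you $2h > N$, i.e., honest majority. Your proposed patch --- that any adaptive threshold rule can be exploited by feigning silence and then equivocating, thereby converting the $1/2$ tolerance into $1/3$ --- is asserted but not proved, and as a general claim about synchronous protocols it is false: synchronous BFT with $f < N/2$ is achievable, so no such reduction can exist at the global, non-sharded level. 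The $1/3$ in this theorem does not come from quorum intersection at all; it is an assumption of the threat model, imposed because random assignment of nodes to shards of size $M$ cannot keep every shard majority-honest with non-negligible failure probability unless the global adversary fraction is bounded well below $1/2$.

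The paper's proof is consequently much more elementary and is carried out at the shard level, which your argument never touches. It fixes the adversary at its cap $f = \lfloor (N-1)/3 \rfloor$ and considers a shard that already contains $\lfloor (M-1)/2 \rfloor$ adversarial nodes --- an event that random assignment cannot exclude. If even one of the $N - \lfloor (N-1)/3 \rfloor = \lceil 2(N-1)/3 \rceil + 1$ uncorrupted nodes goes offline during an epoch and it happens to sit in that shard, the shard drops below the honest participation needed to deterministically accept a correct block, so that shard loses liveness (and safety) for the epoch, which already violates the \emph{constant} system-wide guarantee. Hence every one of the $\lceil 2(N-1)/3 \rceil + 1$ honest nodes must stay active at all times. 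To repair your write-up you would need to (i) drop the partial-synchrony safety inequality and the adaptive-exploitation reduction, (ii) take $f \leq \lfloor (N-1)/3 \rfloor$ as given by the threat model rather than derived, and (iii) add the shard-level pigeonhole observation that the absence of a single honest node can coincide with a worst-case shard, which is the actual content of the theorem.
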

\begin{proof}
Let us assume that the adversary has control over $f = \left\lfloor \frac{N-1}{3} \right\rfloor$ nodes across the entire network, and these nodes persistently act in an adversarial manner. Under this scenario, all other nodes must maintain their integrity to satisfy the liveness guarantees and must not leave the system. This is essential to avoid the following situation: Only one uncorrupted node among the $N-\left\lfloor \frac{N-1}{3} \right\rfloor$ uncorrupted participants leaves the sharding protocol during a specific epoch, and it happens that this node belongs to a shard containing $f = \left\lfloor \frac{M-1}{2} \right\rfloor$ nodes. Then the shard becomes unable to deterministic accept an uncorrupted block for this specific epoch. Consequently, both liveness and safety are lost. The loss of liveness and safety in a single shard compromises the constant liveness and safety guarantee of the overall system.
\end{proof}

We use the $\tau$ liveness guarantee for the adversarial nodes, which is defined in Def.~\ref{Theorem}. If a node violates $\tau$ liveness, it will be expelled from the system.\\
\begin{definition}$\tau$ liveness guarantee refers to a blockchain sharding system that mandates that at least $\left\lfloor \frac{2(N-1)}{3} \right\rfloor + 1$ uncorrupted participants constantly act uncorrupted and remain alive, but allowing any other participant to not participate in the system once in every $\tau$ epochs.\label {Theorem}
\end{definition}

\subsection{System overview}
Reticulum has a determined number of nodes $N$. Each node $n_i$ of these $N$ nodes is a member of the public communication chain (PC) and is a member of exactly one process shard and one control shard. 
We elaborate on our designs below:\\

\noindent \textbf{\textit{Public communication chain.}} The public communication chain (PC) coordinates the sharding protocol and stores metadata about the shards. It is a totally-ordered broadcast with persistence and a timestamp guarantee, and all nodes in the system execute it. The PC stores only metadata, which means that its size is independent of the contents of the shards. \\
\\\noindent \textbf{\textit{Shards.}} Shards are modeled as ledger functionalities that are parameterized by the size of the shard and the type (process/control) of the shard. There are $\Lambda$ control shards, each containing at least $N_c = \lfloor N/\Lambda \rfloor$ nodes, and $\beta$ process shards, each containing at least $N_p = \lfloor N/\beta \rfloor$ nodes. Each process shard is governed by only one control shard. Each control shard supervises at least $\lfloor N_c/N_p\rfloor$ process shard. Each process shard governs a non-overlapping set of keys, and the related transactions should be sent to the corresponding process shards. Cross-shard transactions that interact with multiple shards are used if different process shards govern the keys involved in the transactions. Each process or control shard runs an ordinary synchronous BFT protocol, where a leader node is randomly selected to propose a block followed by one round of voting. A process block is accepted in a process shard if all nodes vote to accept it. A control block is accepted in a control shard if the majority of nodes vote to accept it. In normal cases, the control shard does not synchronize the blocks of the process shards. \textit{However, they synchronize the votes for the blocks of each process shard under their governance.} If a block in a process shard does not obtain unanimous votes within a time-bound ($T_1$), it is synced by all nodes in the corresponding control shard in another time-bound($T_2$). The control shard votes on them to finalize the verdicts.\\

\noindent \textbf{\textit{State-block-state structure.}}  The state of the process shard is updated with each process block, and a new state is formed in each epoch. Rather than linking blocks to preceding blocks, they are linked to previous states. If a block does not pass unanimous voting, only the latest state and the rejected block need to be synced by nodes in the control shard, instead of the entire blockchain of the process shard. This greatly reduces the amount of data that needs to be synced, improving the system efficiency.\\

\noindent \textbf{\textit{Bootstrapping.}} The system is initialized with a predetermined list of $N$ participating nodes, which are ranked using a global random number generated by a distributed random number generator~\cite{das2022practical}. Nodes are assigned to different shards after bootstrapping according to rank. \\

\noindent \textbf{\textit{Simple Cross-shard communication.}} Cross-shard communication, which involves the exchange of information and data between different shards in a sharding system, has been extensively researched in the literature~\cite{kokoris2018omniledger,zamyatin2021sok,avarikioti2019divide,wang2019sok,huang2022brokerchain,wang2021low}. However, implementing a reliable cross-shard communication protocol in a sharding system that potentially have liveness issues be challenging. Gearbox, for instance, customized the existing protocol Atomix~\cite{kokoris2018omniledger} to finalize messages on the shards to address this issue, which further complicated the algorithm. In contrast, Reticulum eliminates the need for specialized cross-shard protocols altogether. Since we do not adjust shard membership, signatures contained in cross-shard transactions can be trusted without considering if they are signed by the correct shard members at the time. As a result, we can use ordinary cross-shard protocols for cross-control-shard transactions without worrying about the complications associated with deadlocked shards.

\subsection{Objectives} 
 Our protocol involves processing a set of transactions submitted to our protocol by external users. Similar to those in other blockchain systems, these transactions consist of inputs and outputs referencing other transactions and are accompanied by a signature for validity. The set of transactions is divided into $\beta=\lfloor N/N_p \rfloor$ blocks proposed by disjoint process shards each sized $N_p$, with $y_{\{i,j\}}$ representing the $j$-th transaction in the block of $i$-th process shard. There are $\Lambda = \lfloor N / N_c \rfloor$ control shards, each sized $N_c$. The $i$-th process block is governed by $\lfloor i / N_c \rfloor$-th control shard. All nodes within the same process shard have access to an external function $g1$ that determines the validity of a transaction, outputting 0 or 1 accordingly. All nodes in the same control shard have access to an external function $g2$ that determines the validity of a process block that did not pass the unanimous voting in the first phase, outputting 0 or 1 accordingly. $Block(i)$ is the block of process shard $i$ of the current epoch. If $Block(i)$ fails to obtain unanimous acceptance in the first phase and $g2(Block(i)))=1$, the leader node of $\lfloor i / N_c \rfloor$-th control shard in the current epoch should embed $Block(i)$ to the control block of current epoch and mark it as ``accepted''.
  $z_\{p,q\}$ represents the $q$-th process block embedded in the $p$-th control block that is marked as ``accepted''. The protocol $\Pi$ aims to output a set $(Y, Z)$ containing $\beta$ disjoint process blocks $Y_{i} = \{y_{\{i,j\}}\}$ where $j \in \{1\dots |Y_{i}|\}$ and $\Lambda$ control blocks $Z_{p} = \{z_{\{p,q\}}\}$ where $q \in \{1\dots |Z_{p}|\}$, satisfying the following conditions:
  
\noindent\begin{itemize}
\item Agreement: For any $i \in \{1\dots\beta\}$, at least one honest node agree on $Y_{i}$ with a probability of greater than $1 - 2^{-\sigma}$, where $\sigma$ is the security parameter. In the meantime, $\Omega(\log N)$ honest nodes agree on $Z_{p}$ with a probability of greater than  $1 - 2^{-\sigma}$. \\
\item Validity: For each $i \in \{1 \dots \beta\}$ and $j \in \{1\dots|Y_{i}|\}$, $g1(y_{\{i,j\}}) = 1$.  For each $p \in \{1 \dots \Lambda\}$ and $q \in \{1\dots|Z_{p}|\}$, $g2(y_{\{p,q\}}) = 1$. \\
\item Termination: Each shard $i \in \{1\dots\beta\}$ decide (approved unanimously or not) on $Y_i$ within $T_1$. Each shard in $p \in \{1\dots\Lambda\}$ decide on $Z_p$ within $T_2$. 
\end{itemize}

\noindent The design objectives of Reticulum are to achieve the following while satisfying the above conditions.
\noindent\begin{itemize}
\item Scalability: The number of control shards $\Lambda$ and process shards $\beta$ in Reticulum can grow with the increase of the total number of nodes in the network, denoted by $N$, without affecting the aforementioned conditions. This design ensures that Reticulum can handle a growing network and maintain high throughput. Furthermore, as the number of honest nodes in the network increases, Reticulum's throughput also increases. 
\item Efficiency: Reticulum is designed to minimize per-node  computation, and storage complexity. The per-node computation complexity is $O(N)$, where $N$ is the total number of nodes in the network. The per-node storage complexity is $O(OT/k)$, where $OT$ is the total number of transactions, and $k$ is the average number of process shards
\end{itemize}

In summary, our objective is to design a consensus protocol that achieves agreement among a sufficient number of honest nodes, ensuring the validity of the agreed-upon transactions, scales with the size of the network and the runtime adversary population, and has efficient communication, computation, and storage cost.

\section{Reticulum protocol}
This section will provide a detailed description of the Reticulum protocol. Reticulum comprises three main components: bootstrapping, two-layer-shard consensus, and cross-shard transactions. 

In Reticulum, each node participates in the public communication chain (PC) and exactly one process shard, along with its corresponding control shard. This means that each node maintains two additional chains besides PC: the \emph{process-shard-chain} specific to the process shard and the \emph{control-shard-chain} specific to the control shard.
A block proposed for the process-shard-chain is called a \emph{process block}, while that for the control-shard-chain is called a \emph{control block}. A process block includes a collection of transactions, and the control block includes the hashes of process blocks and their vote signatures.

\subsection{Bootstrapping} The Bootstrapping phase is used to assign nodes to shards randomly and unbiasedly. We first share a predetermined list of $N$ nodes with all participants. Then, a random sequence $C$, where $|C|=N$, is generated collectively utilizing a random beacon~\cite{das2022practical}. The random beacon must be accessible to all participants, verifiable by all parties, and should not be controlled by any entity. 

We use a $getShardIndex$ function to determine a node's shard membership. This function takes a node $Node_j$ as input, retrieves the index $C_{index}$ of $Node_j$ in the sequence $C$, and then calculates the indices $index_1$ and $index_2$ for the node's process and control shards, respectively. The output is a tuple $(index_1, index_2)$. Note that bootstrapping only runs at the beginning of the protocol or when the protocol re-starts with new node membership.
The pseudo-code for  $getShardIndex$ is given in Alg.~\ref{euclid}. 
\begin{algorithm}
    \caption{getShardIndex}\label{euclid}
    \begin{algorithmic}[1]
    \State $N_c$ and $N_p$ are the numbers of nodes inside a control and process shard, respectively, $N_c|N_p$. A sequence $C$ is generated using a distributed random beacon from $\{Node_1,Node_2, \dots, Node_N\}$. $|C|=N$. When $N\nmid N_c$, the number of nodes inside the last control shard and the last process shard may exceed $N_c$ and $N_p$.
    \Function{getShardIndex}{$Node_j$}
    \State $C_{index}\gets$ The index of $Node_j$ in the sequence $C$ starting from 0
    \State $index_1,index_2\gets $ $\lfloor C_{index} / N_p\rfloor,\lfloor index_1/( N_c/N_p) \rfloor$
    \If {$C_{index}\geq \lfloor N/N_c\rfloor\times N_c$}
        \State $index_2\gets  index_2-1$ \Comment{Add nodes to the last control shard but not a new control shard}
                \If {$N\nmid N_p$}
                \State $index_1\gets \min \left (\lfloor (C_{index}/N_p)\rfloor, \lfloor (N/N_p)\rfloor-1 \right )$\Comment{Add nodes to the last process shard but not a new process shard}
      
    \EndIf
    \EndIf

    \State \textbf{return} $(index_1,index_2)$ \Comment{$Node_j$ belongs to $ps_{index_1}$ and $cs_{index_2}$}
    \EndFunction
    \end{algorithmic}
    
\end{algorithm}
\begin{lemma}
The bootstrapping phase guarantees the safety property of shard membership assignment, ensuring that no node is assigned to multiple shards.
\end{lemma}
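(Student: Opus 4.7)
The plan is to show two facts: first, that every node receives a well-defined shard index pair from $getShardIndex$, and second, that this pair is unique per node, so no node can end up in two process shards or two control shards.

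First I would argue uniqueness of the input. The random beacon produces a single permutation $C$ of the predetermined list $\{Node_1,\dots,Node_N\}$, so each $Node_j$ has exactly one position $C_{index}\in\{0,1,\dots,N-1\}$. Since $getShardIndex$ is a deterministic function whose only non-constant input is $C_{index}$, the output tuple $(index_1,index_2)$ is uniquely determined by $Node_j$. Consequently, $Node_j$ cannot be simultaneously placed in two distinct process shards, nor in two distinct control shards: its process-shard label $index_1$ and control-shard label $index_2$ are single values, not sets.

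Next I would verify that both labels lie in the valid index ranges, so ``assignment to a shard'' is well-defined. I would split into cases according to whether $N_p\mid N$ and $N_c\mid N$. In the divisible case, the claim is immediate: $\lfloor C_{index}/N_p\rfloor$ ranges over $\{0,\dots,N/N_p-1\}$ as $C_{index}$ ranges over $\{0,\dots,N-1\}$, and likewise for the control index via the identity $\lfloor\lfloor x/N_p\rfloor/(N_c/N_p)\rfloor=\lfloor x/N_c\rfloor$ (which uses $N_p\mid N_c$, stated in the algorithm's preamble). In the non-divisible case I would use the explicit clamp in lines 5--8: any $C_{index}\geq \lfloor N/N_c\rfloor\cdot N_c$ is deflected into the last control shard by decrementing $index_2$, and similarly any overshoot in the process-shard index is clamped by the $\min$ to $\lfloor N/N_p\rfloor-1$. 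Because these corrections are applied uniformly to all such nodes, they merely enlarge the last process/control shards instead of creating new ones, preserving the partition structure.

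Combining these two arguments, each $Node_j$ is mapped by $getShardIndex$ to exactly one pair $(index_1,index_2)$ with $index_1\in\{0,\dots,\lfloor N/N_p\rfloor-1\}$ and $index_2\in\{0,\dots,\lfloor N/N_c\rfloor-1\}$, which yields membership in exactly one process shard and one control shard. The main obstacle is the bookkeeping for the non-divisible edge cases: one has to check that the two nested \textbf{if} clauses do not push any $index_2$ (or $index_1$) to $-1$ or to an index that collides with a shard other than the intended ``last'' one. I would handle this by a direct case analysis on the ranges $C_{index}\in[\lfloor N/N_c\rfloor\cdot N_c,\,N-1]$ and verifying monotonicity of the clamped expressions, which is a short but careful arithmetic check rather than a deep argument.
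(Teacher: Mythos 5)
Your proposal is correct and follows essentially the same route as the paper's proof: both arguments rest on the observation that $getShardIndex$ is a deterministic function of the unique position $C_{index}$ in the beacon-generated sequence $C$, so each node receives exactly one pair $(index_1, index_2)$, with the non-divisible cases handled by the clamping in lines 5--8 that folds overflow nodes into the last shard rather than creating new ones. Your version is somewhat more careful than the paper's (separating uniqueness from well-definedness of the index ranges and invoking the identity $\lfloor\lfloor x/N_p\rfloor/(N_c/N_p)\rfloor=\lfloor x/N_c\rfloor$), but the underlying argument is the same.
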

\begin{proof}
    Please see the full proof in Appx.~\ref{a1}.
\end{proof}

\subsection{The first phase}
After nodes are randomly assigned to the process shard and the control shard during bootstrapping, Reticulum enters the consensus stage. An epoch for the consensus stage consists of two phases, with the first phase to be completed within a time-bound denoted as $T_1$, and the second phase to be completed within a time-bound denoted as $T_2$. While $T_1$ is set as a constant number, $T_2$ depends on the number of process shards and their verdicts in the first phase. We will discuss this in detail in Sec.~\ref{4.2.2}. 

The first phase is used for the process shards to reach a consensus on the process blocks. After bootstrapping, there are at least $N_p$ nodes in any process shard $ps_i$, $i \in [0,\beta]$. At the beginning of the first phase, each $ps_i$ tries to generate and reach consensus on a block using a standard binary vote-based synchronous consensus protocol~\cite{DBLP:journals/corr/0001NAD17} within the time-bound $T_1$. Each process shard requires a unanimous vote to decide on a block. In other words, as long as there is one adversarial node in $ps_i$, it cannot decide on a block and cannot evolve into a new state. Therefore the process blocks have $L=0$. Apparently, the consensus protocol in the first phase only yields two possible outcomes:
\begin{itemize}
    \item\textbf{A block is unanimously accepted:} This implies that there is a probability greater than $1 - 2^{-\sigma}$ that no nodes act adversarial in this process shard. 
    \item \textbf{A block has not received unanimous voting:} Because of the synchronous communication assumption, all honest nodes vote and receive each other's vote within $T_1$. Because we are using a binary vote-based consensus protocol, the honest nodes will decide on the same verdict (either accept or reject) of a block. When a block has not received unanimous voting, it implies that there are adversarial nodes in the shard.
\end{itemize}
Note that the votes should be broadcasted to all nodes within the same control shard governing this process shard using the Byzantine reliable broadcast protocol $(\Delta+\delta)$-BRB~\cite{abraham2021good} with $f\leq \lfloor (N_c-1)/2\rfloor$ to allow nodes in the other process shards to know the voting output and the deterministic set of votes. The pseudo-code for the first phase of the two-layer consensus is given in Alg.~\ref{consensus-protocol-first-stage}.

\begin{algorithm}
\caption{First phase of Two-layer consensus}\label{consensus-protocol-first-stage}
\begin{algorithmic}[1]
\Procedure{FirstPhase}{$node$}\Comment{$node$ run this procedure}
\State $votes \gets$ empty set of votes
\If{$\text{this node is the leader of its process shard}$} 
\State $block \gets \text{GENERATEBLOCK}(node)$
\State $\text{BROADCASTBLOCK}(block, node.ps)$ \Comment{broadcast the block to all nodes in the same process shard}
\EndIf
\If{$\text{this node is not the leader of its process shard}$} 
\State $block \gets \text{RECEIVEBLOCK}()$
\EndIf

\State $vote \gets \text{GENERATEVOTE}(node, block)$ \Comment{verify and vote for the block}
\State $\text{BROADCASTVOTE}(vote, node.cs)$ \Comment{broadcast the vote to all nodes in the same control shard using a BRB protocol}
\While{$\text{in $T_1$ and CONSENSUS(votes, node.ps)=FAILED}$} \Comment{wait for votes}
\State $votes \gets votes \cup {\text{RECEIVEVOTE}()}$ \Comment{add the vote to the set of votes}
\EndWhile

\State $result \gets \text{CONSENSUS}(votes, node.ps)$ 
\State \textbf{return} $result$ 
\EndProcedure

\Function{broadcastBlock}{$block, ps$}
\For{$n \in ps$}
\State send $block$ to node $n$
\EndFor
\EndFunction

\Function{broadcastVote}{$vote, cs$}
\For{$n \in cs$}
\State send $vote$ to node $n$
\EndFor
\EndFunction

\Function{consensus}{$votes,ps$}
\State $count_1 \gets 0$
\For{$v \in votes$}
\If{$v.value = approve$ and $v.ps=ps$}
\State $count_1 \gets count_1 + 1$
\EndIf
\EndFor
\If{Unanimously approved}
\State \textbf{return} $ACCEPTED$
\Else
\State \textbf{return} $FAILED$
\EndIf
\EndFunction
\end{algorithmic}
\end{algorithm}

\begin{theorem}
    In Reticulum, nodes can reach a consensus for whether a node is identified as the adversarial node when the votes of the first phase are broadcasted using Byzantine reliable broadcast protocol. \label{lemma2}

\end{theorem}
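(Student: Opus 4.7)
The plan is to reduce the theorem to the standard safety, agreement, and termination guarantees of the $(\Delta+\delta)$-BB Byzantine broadcast primitive, and then show that these guarantees, combined with the honest majority inside each control shard, force every honest node in a control shard to compute the same adversarial set via a deterministic classification rule. I would first fix a process shard $ps_i$ and its governing control shard $cs$, and let $\mathcal{H} \subseteq cs$ denote the honest members of $cs$; since $L,S < 50\%$ for control shards, $|\mathcal{H}| \geq \lceil (N_c+1)/2 \rceil$, which matches the resilience $f \leq \lfloor (N_c-1)/2\rfloor$ required by $(\Delta+\delta)$-BB.

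Next, I would invoke the BB primitive once per voter $v \in ps_i$. By the agreement property of $(\Delta+\delta)$-BB, every node in $\mathcal{H}$ outputs the same value for $v$'s broadcast, even when $v$ itself is Byzantine and attempts equivocation; by validity, when $v$ is honest that common value is exactly $v$'s vote; and by termination within time $\Delta+\delta$, every node in $\mathcal{H}$ has a fully determined local vote-vector $V \in \{\text{accept},\text{reject},\bot\}^{|ps_i|}$ by the end of $T_1$, where $\bot$ records that no valid broadcast from $v$ was delivered. The key point is that $V$ is identical across all of $\mathcal{H}$.

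I would then define a deterministic classifier that the protocol fixes ex ante: a node $v$ is flagged as an opponent if $V_v = \text{reject}$ while $g2$ accepts the block, and flagged as a silent violator if $V_v = \bot$; the adversarial set is the image of this classifier on $V$. Since the classifier is a pure function of $V$ and the block (and $g2$ is a shared external function available to every member of $cs$), and since $V$ is identical across $\mathcal{H}$, every honest node in $cs$ computes the same adversarial set. This set is then embedded into the control block; because $|\mathcal{H}|$ is a strict majority, the ordinary BFT vote inside the control shard carries this identification with probability at least $1 - 2^{-\sigma}$, delivering a consensus decision in the sense of the Agreement condition of Sec.~3.3.

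I expect the main obstacle to be handling Byzantine voters $v \in ps_i$ who try to split the view of the control shard — sending accept to some control-shard members and reject (or nothing) to others — since this is exactly what would prevent a consistent adversarial-set computation. This is precisely the reason the protocol uses $(\Delta+\delta)$-BB rather than point-to-point delivery or a simple gossip: the agreement property of BB is what neutralises equivocation, and the proof's one genuinely load-bearing step is to cite this property with the parameters $f \leq \lfloor (N_c-1)/2\rfloor$ matched to the honest-majority assumption on the control shard. Everything else (validity of $g2$, synchrony within $T_1$, and the deterministic-classifier argument) is bookkeeping on top of that invocation.
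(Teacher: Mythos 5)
Your proposal is correct and rests on the same load-bearing step as the paper's own proof: the agreement and termination properties of $(\Delta+\delta)$-BB with $f\leq \lfloor (N_c-1)/2\rfloor$ give every honest node in the control shard an identical, fully determined view of who voted what (including silence), from which a deterministic classification of adversarial nodes follows. You spell out the vote-vector and the explicit classifier more carefully than the paper does, but the argument is essentially the same.
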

\begin{proof}
Blockchain consensus ensures the validation of the consensus for the sequence of transactions, rather than the correctness of individual transactions. The correctness of transactions can be verified by referring to information in previous blocks of the blockchain. Therefore, a node will only vote to reject blocks if they contain violations such as double-spending or over-spending transactions that contradict previous verdicts. The control blocks also include decisions for expelling nodes that did not participate in the voting for the process shards. To determine this in consensus, it is necessary for nodes to reach a consensus on whether a node voted for the process block within the time-bound. To support this design, when nodes vote in the first phase, the votes are broadcast to all nodes in the control shard (sized $N_c$) using a Byzantine reliable broadcast protocol called $(\Delta+\delta)$-BRB~\cite{abraham2021good} with $f\leq \lfloor (N_c-1)/2\rfloor$. 
It guarantees that, if a vote $V$ is broadcast to all nodes in a control shard of size $N_c$, all honest nodes in this control shard will terminate the broadcast protocol with the same value $V$. Hence, they can reach a consensus on if a node within the process shard voted or not. Consequently, when an adversarial node proposes/votes for a process block containing incorrect messages or its liveness violates $\tau$ liveness guarantees, it can be identified as an adversarial node in consensus.
\end{proof}

As shown in Thm.~\ref{lemma2}, Reticulum can safely identify nodes as adversarial in consensus. Reticulum will confiscate the Proof-of-Stake (PoS) of adversarial nodes and expel them from the system. We assume that most adversaries act rationally and do not wish to be marked as adversarial.
However, we demonstrate in Sec.~\ref{4.2.2} that Reticulum can also tolerate attacks when the adversary is willing to risk confiscation of their PoS.

\subsection{The second phase}
\label{4.2.2}
The second phase serves as a ``safety net'' assisting a final verdict for the blocks not passing the unanimous voting. The nodes in the process shards that fail to output unanimously approved blocks send the last state of the process shard and the failed block to all the other nodes in the same control shard.   

Similar to other vote-based consensus protocols, a leader node is selected to propose the control block in the second phase. The control block contains information about the process blocks of each process shard under its control. This information includes: (1) a boolean parameter indicating whether the process block passed unanimous voting within $T_1$, (2) if the block passed unanimous voting, the signatures from the voters are attached as proof, and (3) if the block did not pass unanimous voting, the hash of the block and a decision to accept or reject the block are attached. This decision is made by the creator of the control block, i.e., the current leader. Nodes can request the full content of the process blocks using the hash. Nodes in the control shard will verify the blocks received from the failed process shards and vote for the control block if they agree with the decision made by the current leader.

The completion of the second phase must be accomplished within $T_2$. Unlike $T_1$, the value of $T_2$ for a control shard is dynamically adjusted based on the count of succeeded process shards within that control shard, denoted as $N_s$. The calculation for $T_2$ can be expressed using the following equation:
\begin{equation}\label{eq1}
T_2 = \lambda \cdot (\lfloor N_c/N_p \rfloor - N_s+1)
\end{equation}
where $\lambda$ represents a pre-defined constant value, $N_p$ and $N_c$ are the size of process shards and control shards respectively. $T_2$ is set to ensure that the network bandwidth requirement for nodes in the second phase is similar to that in the first phase.    

After the aforementioned procedures, check if any node $i$ is considered the adversarial node. The PoS of the node $i$ is confiscated and the node $i$ is expelled from the system. Cor.~\ref{l1} shows that it is safe to operate with the remaining nodes. 

Note that if the majority of nodes in the control shard reject the control block, a new proposed block by a new leader will be voted within the same time-bound $T_2$. A new epoch is initiated only when a control block is accepted.

When a process block of epoch $X+1$ is rejected in an accepted control block, the corresponding process shard will not undergo any new state evolution in epoch $X+1$. Therefore, the state of that process shard in epoch $X+1$ will be the same as in epoch $X$.
\begin{figure}
    \centering
    \includegraphics[width=0.5\textwidth]{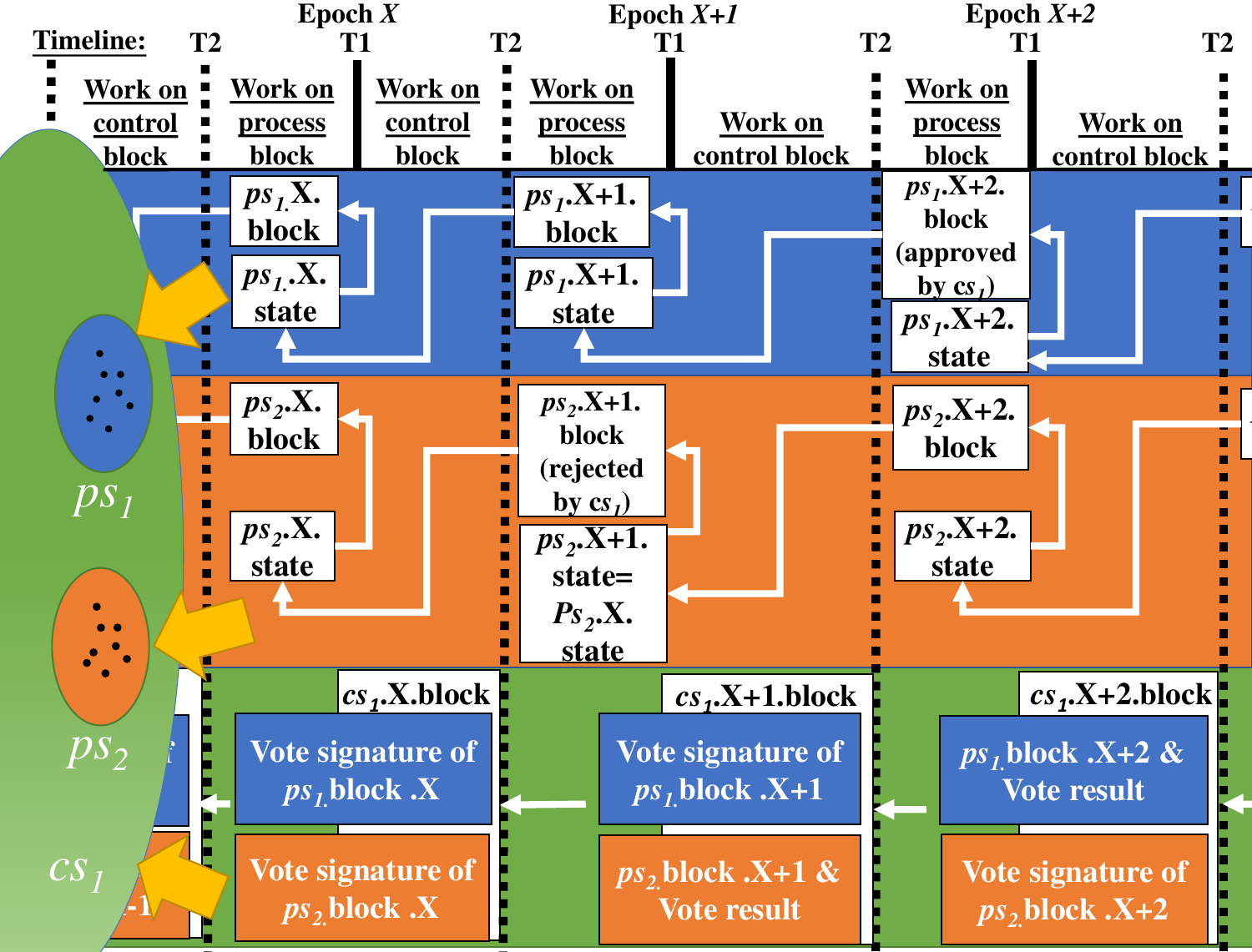}
    \caption{An example of the two-layer consensus. In this example, the control shard $cs_1$ contains two process shards, $ps_1$ and $ps_2$. We denote the state and block of $ps_i$ for Epoch $X$ by ``$ps_i$.X.state'' and ``$ps_i$.X.block'' respectively. A new state of $ps_i$ is derived by applying the transactions in $ps_i$.X.block over $ps_i$.X.state. In the depicted scenario, $ps_1$ and $ps_2$ generated unanimously approved blocks in Epoch $X$, but $ps_2$ failed to agree on the process block of Epoch $X+1$ unanimously. Therefore, $ps_2$.(X+1).block is attached to $cs_1$.(X+1).block, which denotes the block of $cs_1$ for epoch $X+1$. Nodes in $cs_1$ vote on $ps_2.X+1.block$. In this case, $cs_1$ rejects this block, so $ps_2.X+1.state$ remains the same as $ps_2.X.state$. Note that $T_2$ in Epochs $X+1$ and $X+2$ are longer than that in Epoch $X$ because some process blocks did not pass unanimous voting. $T_2$ in Epoch $X+1$ and $X+2$ are longer than that in Epoch $X$ because there are process blocks not passing the unanimous voting. }
    \label{fig:overview}
\end{figure}
Fig.~\ref{fig:overview} provides a visual illustration of the process and control shards in Reticulum. The pseudo-code for the second phase is given in Alg.~\ref{second-phase}.

\begin{algorithm}
\caption{Second phase of Two-layer consensus}\label{second-phase}
\begin{algorithmic}[1]
\Procedure{SecondPhase}{$node$}\Comment{$node$ run this procedure}
\If{$node \in  failed process shard$}
\State send last state and the process block to nodes in the same control shard
\EndIf
\If{$node$ is the leader node in control shard}
\For{$B \in$  process shards}
\State VerifyProcessBlock($B$)
\EndFor
\State create and broadcast the control block accordingly
\EndIf
\State vote for the control block
\If{control block is accepted by majority within $T_2$} 
\For {node $i \in$ control shard}
\State confiscated node $i$'s PoS and expel node $i$ if node $i$ is considered the adversarial node.
\EndFor
\State initiate new epoch
\Else
\State repeat second phase with same $T_2$
\EndIf
\If{process block of epoch $X+1$ is rejected in control block}
\State state for process shard in epoch $X+1$ is set to be identical to state in epoch $X$
\EndIf
\EndProcedure
\Function{VerifyProcessBlock}{B}
\If{$B$ passed unanimous voting within $T_1$}
\State attach signatures of voters to $B$ to the control block
\State attach boolean parameter indicating $B$ passed unanimous voting to the control block
\Else
\State attach boolean parameter indicating $B$ failed unanimous voting to the control block
\State attach decision to accept or reject $B$ to the control block
\EndIf
\EndFunction
\State \textbf{Calculate $T_2$:}
\State $\lambda$ - constant value
\State $N_s$ is the count of succeeded process shards in control shard
\State $T_2 \gets \lambda \cdot (\lfloor N_c/N_p \rfloor - N_s+1)$
\end{algorithmic}
\end{algorithm}

\begin{corollary} \label{l1}
    Even if the adversaries have a node in each process shard, the system can resist attacks on liveness while maintaining safety.
\end{corollary}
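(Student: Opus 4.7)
The plan is to decompose the corollary into a safety component and a liveness component, and then appeal to the expulsion mechanism guaranteed by Theorem~\ref{lemma2} to show that the liveness attack cannot persist across epochs.

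First I would observe that the hypothesis---at least one adversarial node in every process shard---immediately implies that no process block can clear the $S<100\%$ threshold in the first phase, since a single dissenting or silent vote is enough to block unanimity. Consequently every process block in such an epoch is escalated to the second phase, and the entire argument reduces to showing that the control shards can both safely and promptly finalize the escalated blocks.

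For safety, I would count adversaries per control shard. Under the global cap $f\le\lfloor(N-1)/3\rfloor$ together with the assumption that each of the $\beta=\lfloor N/N_p\rfloor$ process shards holds at least one adversary, we obtain $\beta\le\lfloor(N-1)/3\rfloor$, i.e.\ $N_p\ge 3$. Because each control shard oversees $\lfloor N_c/N_p\rfloor$ process shards and, by the bootstrapping guarantee, adversaries land in control shards according to the same random assignment, the expected adversary count inside any control shard is at most $\lfloor N_c/N_p\rfloor\le N_c/3<N_c/2$; the standard Chernoff-style bound used throughout the paper then ensures a strict honest majority in every control shard except with probability $2^{-\sigma}$. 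Since the control shard decides by majority vote in a synchronous BFT, no invalid block can be accepted, establishing safety.

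For liveness, the honest majority inside each control shard implies that the second-phase protocol terminates with a verdict within $T_2$ on every escalated block, so global progress never stalls even when the first phase is uniformly blocked. Furthermore, by Theorem~\ref{lemma2}, any node that voted to reject a block later accepted by the control shard, or that remained silent beyond the $\tau$-liveness budget, is identifiable in consensus; its PoS is confiscated and it is expelled. Thus the population capable of mounting liveness attacks strictly shrinks over time, and after finitely many epochs the first-phase unanimity recovers. The main obstacle is the per-shard counting step: verifying that the ``one adversary per process shard'' hypothesis, combined with the global cap and random assignment, forces an honest majority in \emph{every} control shard with overwhelming probability, rather than only in most of them; once this combinatorial fact is pinned down, the majority BFT in the second phase and the expulsion guarantee of Theorem~\ref{lemma2} assemble directly into the claim.
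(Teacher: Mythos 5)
Your overall architecture (everything escalates to the second phase; control shards decide safely by majority; expulsion shrinks the attacking population and restores first-phase unanimity) matches the spirit of the paper's argument, but the counting step you yourself flag as ``the main obstacle'' is in fact broken as written, and it is the step your safety claim rests on. The hypothesis ``at least one adversary in each process shard'' gives a \emph{lower} bound on the adversaries inside each control shard, not an upper bound: subject only to the global cap $f\le\lfloor(N-1)/3\rfloor$, the adversary may place up to $N_p-1$ nodes in many process shards (the paper explicitly allows this worst case in the $\tau$ analysis), so ``expected adversary count at most $\lfloor N_c/N_p\rfloor$'' does not follow, and neither does $N_p\ge 3$ play any role. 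The honest majority of every control shard is not something to be re-derived from the per-process-shard hypothesis at all; it is guaranteed by the choice of $N_c$ in the shard-sizing section, where the binomial tail $P_{fc}$ under uniform random assignment with global ratio $P_a\le 1/3$ is driven below $2^{-\sigma}$. Your Chernoff appeal lands in the right place only if you route it through that pre-existing guarantee rather than through the conditioning on one adversary per process shard.

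The paper's own proof takes a different and shorter route: it splits on adversary \emph{incentive} rather than on safety versus liveness. Rational adversaries who wish to keep their PoS can go silent only once per $\tau$ first-phase votings, which bounds the rate of escalation to the second phase; suicidal adversaries who accept confiscation can stall the first phase only until they are expelled (roughly two epochs after the attack begins), after which the remaining nodes seek unanimity again. The residual obligation --- that expulsion does not itself undermine the control shards --- is discharged by appeal to Theorem~\ref{tt1} and the $\tau$-liveness guarantee of Definition~\ref{Theorem}: the expelled nodes are by construction not among the $\left\lceil \frac{2(N-1)}{3}\right\rceil+1$ constantly honest participants, so the control shards retain their honest majority. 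Your liveness half (termination within $T_2$ plus expulsion via Theorem~\ref{lemma2}) is consistent with this, though it omits the rate-limiting role of $\tau$ for non-suicidal adversaries, which is the part of Corollary~\ref{l1} that makes the attack merely a throughput degradation rather than a repeated outage.
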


\begin{proof}
A node is expelled from the system and its PoS be confiscated if it proposed a process block containing faulty information (when it is a leader node), or it voted for a process block containing faulty information or it did not vote at least $\tau-1$ times in every $\tau$ rounds of voting. Consider two types of adversarial behaviors:

 1) \textbf{Adversaries aiming to retain PoS:} To avoid losing their PoS, adversaries can only remain silent once in every $\tau$ round of first-phase voting. By setting $\tau$ appropriately, the number of process blocks progressing to the second phase can be restricted. This restriction inhibits the ability of adversaries to influence the consensus outcome and preserves the integrity of the system.

 2) \textbf{Adversaries wishing to halt system progress:} Adversaries may attempt to halt the system progress by leaving the system permanently, even if their PoS is confiscated by the system. The loss of a relatively small $\beta$ number of PoS (one adversarial node per process shard), is outweighed by the potential gain of bringing the system to a complete stop. Although such attacks could be detrimental, the system is designed to expel these nodes at the second epoch since the attack and recover liveness by seeking unanimous consensus among the remaining nodes in the shard.

It can be inferred that the expelled nodes, which fail to participate in the system at every $\tau$, are not classified as one of the $\left\lfloor \frac{2(N-1)}{3} \right\rfloor + 1$ constant uncorrupted participants nodes. By referencing Def.~\ref{Theorem} (the liveness guarantee theorem),  given that the system maintains $\tau$ liveness guarantee for the honest nodes, there are sufficient honest nodes to ensure the constant safety and liveness guarantee of the control shards after deleting the expelled nodes. Consequently, the process shards can safely seek unanimous consensus among the remaining nodes.

Hence, the two-layer consensus protocol resists attacks on liveness and will always recover from the attacks while upholding safety. 
\end{proof}

\begin{theorem}
    The two-layer consensus protocol guarantees safety, ensuring that only blocks approved by a unanimous vote in the first phase or by a majority vote in the second phase are accepted, and rejected blocks do not affect the state evolution in subsequent epochs. 
\end{theorem}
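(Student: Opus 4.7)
The plan is to decompose the safety claim by the two phases, then handle the state-evolution consequence separately. First I would set the stage by invoking the synchronous gossip assumption and the $(\Delta+\delta)$-BB broadcast guarantee from Theorem \ref{lemma2}: every honest node in a process shard, together with every honest node in the governing control shard, terminates Phase~1 with an identical multiset of votes within $T_1$. From this I can argue Phase~1 safety directly. Because the binary synchronous BFT protocol of \cite{DBLP:journals/corr/0001NAD17} requires a unanimous verdict to accept, no block can be declared ``accepted in Phase~1'' unless every honest node in the process shard and every silent adversary who did not explicitly reject has approved it; a single explicit rejection or missing vote forces the block into Phase~2. Hence any block that bypasses Phase~2 must carry the full set of approval signatures, which the control shard records in its block.

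Next I would handle Phase~2. A block reaches Phase~2 only when its Phase~1 trace (as observed identically by all honest control-shard members thanks to $(\Delta+\delta)$-BB) is non-unanimous. The control shard runs an ordinary synchronous BFT with threshold $L,S<\tfrac{1}{2}$, and by Corollary \ref{l1} together with Theorem \ref{tt1} the honest majority inside every control shard is preserved even after the expulsion of $\tau$-liveness violators. Standard synchronous BFT safety then yields: a control block is accepted only if a majority of the control-shard members vote for it, and two honest control-shard members cannot commit to inconsistent verdicts on the same failed process block within the same epoch (rejected control blocks simply trigger a new leader within the same $T_2$ window, as stated in the protocol). Therefore the only process blocks marked ``accepted'' in a committed control block are exactly those that obtained a majority control-shard vote.

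Finally, for the state-evolution clause I would use the state-block-state structure explicitly. A process shard's Epoch~$X{+}1$ state is defined to be the image of applying Block$(X{+}1)$ to the Epoch~$X$ state, and this application is conditioned on the block being marked ``accepted'' either in Phase~1 (with attached unanimous signatures) or in the committed control block of Phase~2. For a block flagged ``rejected,'' Alg.~\ref{second-phase} explicitly sets the new state equal to the previous state, so no transaction from a rejected block is ever applied; because every honest node in the process shard also sits in the control shard's information flow (through the broadcast votes and the committed control block), every honest node computes the same next state deterministically. Combining these three observations yields the safety statement and the non-interference of rejected blocks.

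The main obstacle I anticipate is the composition argument in Phase~2: one must rule out the scenario in which a control block is accepted while a conflicting control block in the same epoch is also viewed as accepted by some honest node (equivocation), and one must rule out the pathological case where the governing control shard itself loses liveness mid-epoch. The first is dispatched by invoking the synchronous majority-BFT safety of the control shard together with the $(\Delta+\delta)$-BB-shared vote multisets; the second requires appealing to Corollary \ref{l1} and the $\tau$-liveness accounting, which guarantees that after expulsions the control shard still satisfies $L,S<\tfrac{1}{2}$ and so eventually commits a single block per epoch. Everything else, including the state-rollback clause, then follows by construction of Alg.~\ref{second-phase}.
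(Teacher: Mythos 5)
Your proposal is correct and follows essentially the same route as the paper's own proof, which likewise decomposes the statement into Phase-1 unanimity, Phase-2 majority acceptance in the control shard, and the state-rollback clause enforced by the state-block-state structure of Alg.~\ref{second-phase}. The only difference is one of emphasis: you invoke the $(\Delta+\delta)$-BB vote alignment and Corollary~\ref{l1} more explicitly to rule out equivocation and post-expulsion loss of honest majority, points the paper handles in its surrounding lemmas rather than inside this proof.
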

\begin{proof}
    Please see the full proof in Appx.~\ref{a2}.
\end{proof}

\subsection{Cross-shard transaction}\label{cross}
Cross-shard transactions involve the exchange of records between different process shards. Because nodes only know transactions within their own shards, the cross-shard transaction requires verifying the authenticity and timeliness of records from other process shards.

A shard in Gearbox may experience deadlock, so Gearbox requires liveness detection when doing cross-shard transactions. Reticulum does not have deadlock shards. The transactions in the process block is finalized and become usable when the corresponding control block is accepted. If consider the control shard as a shard in other sharding approaches, it allows direct plug-in for various cross-shard transaction protocols used in the classical sharded blockchains. 
Several approaches exist for processing cross-shard transactions. Omniledger~\cite{kokoris2018omniledger} utilizes a client-driven method that collects and transmits availability certificates. RapidChain~\cite{zamani2018rapidchain} divides cross-shard transactions into single-input single-output transactions and commits them in a specific order. Additionally, Monoxide~\cite{wang2019monoxide} employs relay transactions for cross-shard processing. These methods can be seamlessly applied in Reticulum.   

We show one possible design for implementing cross-shard transactions, which closely replicates the design implemented by Rapidchain. In particular, each cross-shard transaction includes unique participant identifiers, such as one or more wallet addresses for sending money (or any recorded message) and one or more addresses for receiving the funds. Each cross-shard transaction can be decomposed into a set of transactions involving a single sender address and a single recipient address in different shards. We refer to these individual transactions as $Tx_{cross}$. The $Tx_{cross}$ is emitted by $ps_{send}$ and send to $ps_{receive}$ which representing the shard containing the sender address and the shard containing the recipient address respectively. Alg.~\ref{a4} in the appendix provides the pseudocode for this process. Note that we do not claim novelty for the cross-shard transaction designs.  \\

\begin{lemma}
The cross-shard transaction mechanism in Reticulum ensures the security and integrity of transactions between different process shards, providing transaction validity, consensus participation, proof of transaction, and a fixed shard membership approach.
\end{lemma}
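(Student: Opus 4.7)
The plan is to decompose the lemma's statement into its four advertised properties --- transaction validity, consensus participation, proof of transaction, and fixed shard membership --- and dispatch each by combining the cross-shard protocol description in Section~\ref{cross} with results already established earlier in the paper. I would first fix notation by considering a cross-shard transaction that has been decomposed into single-sender/single-recipient atoms $Tx_{cross}$, with $ps_{send}$ emitting each atom and $ps_{receive}$ consuming it.

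For \emph{transaction validity}, I would argue that any $Tx_{cross}$ only becomes observable to $ps_{receive}$ after the process block of $ps_{send}$ containing it has been finalized, either by a unanimous first-phase vote or by a majority second-phase vote of the governing control shard. By the preceding safety theorem for the two-layer consensus, such a finalized block cannot contain double-spends or other violations of the external validity predicate $g1$, so the transferred value is well-formed at the source. For \emph{consensus participation} and \emph{proof of transaction}, I would show that the signatures accompanying $Tx_{cross}$ --- either the unanimous process-shard signatures from the first phase or the majority control-shard signatures from the second phase --- constitute a verifiable certificate. The key observation is that these signatures are disseminated via the $(\Delta+\delta)$-BB broadcast invoked in the first phase (Theorem~\ref{lemma2}), so every honest node that later needs to validate $Tx_{cross}$ can reconstruct the same view of who voted. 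A node in $ps_{receive}$ therefore checks the certificate against the publicly known membership roster of $ps_{send}$ and accepts or rejects the atom deterministically.

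The \emph{fixed shard membership} clause is where I expect the main subtlety to lie. I would invoke the bootstrapping lemma to record that each node is assigned to exactly one process shard and one control shard for the entire lifetime of the protocol, so the set of legitimate signers of a cross-shard certificate is time-invariant. This is what lets $ps_{receive}$ trust signatures from $ps_{send}$ without a separate liveness-detection or finalization subprotocol (unlike Gearbox), and it is also what allows a stale certificate produced in an earlier epoch to remain verifiable without any epoch-to-committee lookup. I would close this part by noting that the same argument allows the Omniledger, Rapidchain and Monoxide cross-shard constructions to plug in unmodified while preserving all four properties.

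The main obstacle, I anticipate, is being precise about what \emph{security and integrity} mean once the individual atoms are composed back into a cross-shard transaction: one must rule out replay across epochs and double-consumption of the same $Tx_{cross}$ at $ps_{receive}$. The cleanest route is to argue that each atom carries a unique identifier pointing to its originating block and index in $ps_{send}$, and that $ps_{receive}$ maintains a local record of already-consumed atoms whose integrity is in turn guaranteed by the two-layer consensus on $ps_{receive}$'s own process blocks. Combining this bookkeeping with the certificate verification argument above then yields all four properties simultaneously.
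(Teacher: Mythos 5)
Your proposal follows essentially the same route as the paper's proof: the same four-way decomposition into transaction validity, consensus participation, proof of transaction, and fixed shard membership, each dispatched by appealing to the two-layer consensus guarantees and the static bootstrapping assignment. The only notable divergences are that the paper's ``proof of transaction'' is a Merkle inclusion proof against an accepted process block rather than the vote-signature certificate you describe, and that your added treatment of replay and double-consumption of $Tx_{cross}$ atoms goes beyond what the paper argues; neither difference changes the substance of the argument.
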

\begin{proof}
    Please see the full proof in appendix~\ref{a3}.
\end{proof}

\subsection{Shard size and time-bound}
\label{5.1.1}
This section calculates the process shard size $N_p$, the control shard size $N_c$ and the settings for time-bound.\\

\noindent \textbf{Set the value of $N_p$.} 
Let $N$ be the number of nodes and $N_{p}$ be the number of nodes in a process shard. The system has $\lfloor N/N_p\rfloor$ process shards and each sized at least $N_p$. 
Because nodes are randomly assigned to the shards, to fulfill a given $P_{fp}$, where $P_{fp}$ represents the probability for a \textit{particular} process shard to be compromised, it yields the equation:
\begin{equation}
N_p=\log_{P_a}(P_{fp})
\end{equation} where the maximum ratio of adversarial population in the system is $P_{a}$.\\

\noindent \textbf{Set the value of $N_c$.} Let $N_{c}$ denote the number of nodes in a control shard. The control shard's safety threshold is $S=L<50\%$. A control shard can output illegal blocks if adversarial nodes control more than $\lfloor\frac12N_c\rfloor$ nodes. The failure rate of a control shard is:
\begin{equation}  P_{fc}=\sum_{i=\lfloor\frac12N_c\rfloor+1}^{N_c}\binom{N_c}{i}{(P_a)}^i{(1-P_a)}^{N_c-i}\end{equation}

With a given $P_{fc}$ and $P_a$, $N_{c}$ can be determined accordingly.
Tbl.~\ref{tb1} exhibits the size of the process and control shards with different $P_a$, $P_{fp}$ and $P_{fc}$.\\

\begin{table}[h!]
\centering
\begin{threeparttable} \small 
\caption{Process shard size and control shard size}
\label{tb1}
\begin{tabular}{cccccc} \hline
\diagbox{\footnotesize $P_{fp},P_{fc}$}{$N_{p}\&N_{c}$}{$P_{a}$}&15\%&20\%&25\%&30\%&33\%\\ \hline
${10^{-5}}$&7\&27&8\&41&9\&63&10\&105&11\&149\\ \hline
${10^{-6}}$&8\&35&9\&51&10\&79&12\&131&13\&185\\ \hline
${10^{-7}}$&9\&41&11\&61&12\&95&14\&155&15\&221\\ \hline
\end{tabular}
\end{threeparttable}
\end{table}

\noindent 
\textbf{Security threshold.} The failure rate represents the probability of any shard (regardless if it is a control or a process shard) outputting illegal blocks and is denoted as $P_f$.\begin{equation}
    P_f<\left ((P_{f})_{threshold}=2^{-\sigma} \right )
\end{equation} with predefined security parameter $\sigma$.\\

\noindent \textbf{Determining $P_{fp}$ and $P_{fc}$.} 
Given $(P_f)_{threshold}$ and $N$, we calculate the approximate $P_{f_{approx}}$ bounded by:
\begin{equation}
    P_f<\left (P_{f_{approx}}=P_{fp}\times \beta +P_{fc}\times \Lambda \right )<(P_f)_{threshold}
\end{equation} where $\beta =\lfloor N/N_p\rfloor$ and $\Lambda=\lfloor N/N_c\rfloor$. We want to minimize $N_p$ and $N_c$ to allow as many shards as possible. Tbl.~\ref{tb3} shows several $N_p$ and $N_c$ with the given $P_{f_{approx}}$.\\
\begin{table}[h!]
\centering
\begin{threeparttable}
\caption{Process shard size and Control shard size, $P_a=33\%$, calculated using Alg.~\ref{approx}}
\label{tb3} 
\begin{tabular}{cccccc}\hline
\diagbox{\footnotesize $P_{f_{approx}}$}{$N_{p}\&N_c$}{$N$}&500&1000&5000&10000&20000\\ \hline
${10^{-5}}$&15\&221 &15\&221&17\&257&17\&257&19\&293\\ \hline
${10^{-6}}$&17\&257 &17\&257&19\&293&19\&293&21\&329\\ \hline
${10^{-7}}$&19\&293 &19\&293&21\&329&21\&329&23\&367\\ \hline
\end{tabular}
\end{threeparttable}
\end{table}

\begin{algorithm}[h!]
\caption{Determining $P_{fp}$ and $P_{fc}$ with given $P_a$ and $P_{f_{approx}}$}\label{approx}
\begin{algorithmic}[1]\small

\Function{Get$N_p$}{$P_{fp}$}
\State \textbf{return} $\lceil \log_{P_a}(P_{fp})\rceil$
\EndFunction

\Function{Get$N_c$}{$P_{fc}$}
\State \textbf{return} the smallest $N_c$ that failure rate less than $P_{fc}$
\EndFunction

\Function{Get$N_c$ and $N_p$}{$P_{f_{approx}}$}
\State $\arg \max_{i} f(i)=\{i\times\lfloor N/GETN_p(i)\rfloor+i\times\lfloor N/GETN_c(i)\rfloor|f(i)\leq P_{f_{approx}}\}$
\State \textbf{return} $GETN_p(i)$ and $GETN_c(i)$ 
\EndFunction

\end{algorithmic}
\end{algorithm}

\noindent \textbf{Parameters for time-bounds.} Setting up the time-bound for synchronous protocols is an open question for protocol designers. A large time-bound delivers redundancy, while a short time-bound may cause some nodes to be unable to catch up on the communication of the network. Previous work, e.g., Rapidchain runs a pre-scheduled consensus among committee members about every week to agree on a new time-bound $\Delta$. Thus, in theory, Reticulum can use the same design to agree on new $T_1$ and $\lambda$.

\section{Analysis}
\label{experiment}
This section shows an in-depth analysis for Reticulum.
\subsection{Security Breaches}\label{breaches}
This section analyzes a possible equivocation of consensus associated with the design that leverages the liveness ($L$) and safety ($S$) of the shards. 
Indeed, the adversary cannot force the acceptance of a block containing incorrect information if the system only accepts a block when at least $(S\times M)+1$ nodes (synchronous version) or at least $(\max(2L, S)\times M)+1$ nodes (partially synchronous version)  in a shard sized $M$ voted to accept it. This is because the system only allows at most $S$ ratio of adversarial population in a shard (the probability for a shard to have more than $S$ ratio of adversarial population is negligible.)

However, we prove in Lem.~\ref{llemma1} that synchronous shards with $S \geq \frac{1}{2}$ or partially synchronous shards with $S \geq \frac{1}{3}$ are unable to achieve consensus without the possibility of equivocation, which can lead to security breaches. This issue also affects Reticulum because the process shards of Reticulum maintain $L=0\%$ and $S<100\%$.

\begin{lemma}
Synchronous shards with $S \geq \frac{1}{2}$ or partially synchronous shards with $S \geq \frac{1}{3}$ are unable to achieve consensus without the possibility of equivocation.
\label{llemma1}
\end{lemma}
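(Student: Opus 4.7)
The plan is to use the standard quorum-intersection lower bound, specialised to the single-round threshold-voting consensus the paper uses throughout (cf.\ the ``$7$ out of $10$'' example in the introduction). I would fix a shard with $n$ nodes, adversary budget $f=\lceil Sn\rceil$, and let $t$ denote the minimum number of approving votes required to accept a block.

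The key inequality on the safety side is a quorum-intersection bound: any two conflicting blocks $b_1,b_2$ that both collect at least $t$ votes have vote sets $V_1,V_2$ whose overlap lies entirely in the corrupt set (each honest party votes for at most one of them), so $|V_1\cap V_2|\leq f$; combining with $|V_1\cup V_2|\leq n$ gives $2t\leq n+f$, whence $t>(n+f)/2$ is necessary to preclude equivocation. On the liveness side, the protocol must still accept a legitimate proposal when the corrupt parties behave adversarially against liveness. In the synchronous case the corrupt parties may refuse to vote, so a decision must be reached on the $n-f$ honest votes alone, giving $t\leq n-f$. In the partially synchronous case the same liveness bound holds after GST, but the extra uncertainty between slow and silent parties before GST forces two rounds of quorum intersection across view changes, which tightens the safety requirement further to $t\geq\lfloor(n+f)/2\rfloor+1$ and shrinks the effective liveness budget to roughly $t\leq n-2f$.

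Combining the two inequalities in the synchronous model yields $n(1+S)/2<t\leq n(1-S)$ once safety- and liveness-attackers are identified with each other (the natural setting when a shard has a single adversary budget), so $1+S<2(1-S)$ and $S<1/2$. The partially synchronous analogue gives $n(1+S)/2<t\leq n(1-2S)$, i.e.\ $S<1/3$. Hence, as soon as $S$ meets or exceeds the stated value, there is no threshold $t$ making safety and liveness simultaneously satisfiable; any protocol that terminates must therefore admit some execution in which two conflicting blocks both reach the threshold. I would finish by making the ``possibility'' of equivocation constructive: a corrupt leader delivers $b_1$ to one subset $H_1$ of the honest parties and $b_2\neq b_1$ to the complementary subset $H_2$, and the $f$ corrupt voters double-sign both blocks, so each side observes $|H_i|+f\geq t$ and accepts its own block.

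The main subtlety I expect is calibrating the bound correctly in each network model: a purely single-round argument naively gives $S<1/3$ in both settings, and the improvement to $S<1/2$ for sync only materialises once one credits synchrony with the ability to separate silent (liveness-attacking) corrupt parties from equivocating (safety-attacking) ones via timeouts. I would be careful that the accounting used in the proof matches the shard-level BFT model of the protocols being critiqued (notably Gearbox), and would note in passing that the lemma scopes out multi-round signature-chain protocols such as Dolev--Strong, which escape the $n/2$ barrier via many rounds of relayed signatures but are not what is deployed shard-locally in these sharding systems.
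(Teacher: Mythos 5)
Your proposal takes a genuinely different route from the paper --- a quorum-intersection lower bound combined with a liveness counting argument --- but as written it has two concrete gaps, one arithmetic and one structural. First, the arithmetic does not produce the lemma's thresholds: $\frac{1+S}{2} < 1-S$ gives $S < \frac{1}{3}$, not $S < \frac{1}{2}$, and $\frac{1+S}{2} < 1-2S$ gives $S < \frac{1}{5}$, not $S < \frac{1}{3}$. You notice the first discrepancy yourself and defer it to an unexecuted step (``crediting synchrony with the ability to separate silent from equivocating parties''), but that step is precisely where the synchronous half of the lemma would have to be proved, and it never happens. Second, the framing does not match the protocols the lemma is about. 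Those designs do not search for a feasible threshold $t$; they fix $t = SM+1$ (synchronous) or $2SM+1$ (partially synchronous) and deliberately decouple $L$ from $S$ (Reticulum's process shards have $L=0$ and $S$ near $1$), so your liveness inequality $t \leq n(1-S)$ --- obtained by ``identifying safety- and liveness-attackers with each other'' --- simply does not model them, and the contradiction you derive from it does not apply.

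The paper's argument is different and more elementary: with $f = SM$ adversaries, the honest population is $M-f = M(1-S)$, and once $S \geq \frac{1}{2}$ (resp.\ $S \geq \frac{1}{3}$) this falls strictly below the acceptance threshold $SM+1$ (resp.\ $2SM+1$). Hence every acceptance is pivotal on Byzantine votes, and a Byzantine \emph{voter} (not an equivocating leader) can reveal its accept-vote to one honest node and withhold it from the rest, so honest nodes diverge on whether the block crossed the threshold; the Alice/$A$/$B$ scenario then turns this divergence into two conflicting ``accepted'' blocks. Note that your constructive finish --- leader delivers $b_1$ to $H_1$ and $b_2$ to $H_2$ while $f$ corrupt voters double-sign --- requires $|H_i| + f \geq t$ for both halves and therefore cannot touch, say, a unanimity-threshold shard containing a single corrupt node, whereas the paper's vote-equivocation attack handles exactly that case. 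To repair your proof you would need to replace the liveness inequality with the actual thresholds used by these designs and replace the leader-equivocation attack with vote equivocation; at that point you would essentially have reproduced the paper's argument.
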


\begin{proof}
The maximum number of adversarial nodes allowed in a shard is $f = S \cdot M$, where $S < 1 - L$ in the synchronous model and $S < 1 - 2L$ in the partially synchronous model, and $S \geq L$ at all times.

When a block containing correct information is being voted upon, the security assumption only guarantees that honest nodes receive the same set of $M - f$ honest votes from each other. Because $M - f < (S \times M) + 1$ in the synchronous model and $M - f < (\max(2L, S) \times M) + 1$ in the partially synchronous model, it would require adversarial nodes to vote to accept the block. It is possible for adversarial nodes to send inconsistent votes to different nodes. This implies that not all honest nodes will consider the block as having received enough votes for approval, which contradicts the termination and agreement properties of the consensus protocol, which require all honest nodes to commit to the same value before the protocol terminates.

Consider the phases of voting in the consensus protocol (e.g., Pre-prepare, Prepare, Commit, etc.). For illustration, assume all nodes are in the final phase before accepting a block. In a synchronous example, an adversary convinces an honest node, Alice, that block $A$ has received $(S \times M) + 1$ votes in favor of acceptance, so Alice believes the block is accepted. However, all other honest nodes only received $S \times M$ votes in favor of acceptance, so they believe the block is rejected. After the time-bound expires, the next leader node proposes a new block $B$ to replace $A$. Since Alice believes $A$ has been accepted, she will not vote to accept $B$. However, it is still possible for $B$ to obtain $(S \times M) + 1$ votes in favor of acceptance afterward, as only Alice believed $A$ was accepted.

Ideally, Alice could attach the vote results she previously received when voting to reject $B$ because $A$ was accepted first, or inform others when she became aware that $A$ was accepted. However, if Alice is an adversary, she could present the vote results of $A$ after block $B$ has already received $(S \times M) + 1$ votes for acceptance, or even after new blocks have been built on top of $B$.

This demonstrates that although adversarial nodes cannot enforce the acceptance of adversarial blocks, they can still cause equivocation (both $A$ and $B$ received $(S \times M) + 1$ votes). The reason for this is that $M - f < (S \times M) + 1$ in the synchronous model and $M - f < (\max(2L, S) \times M) + 1$ in the partially synchronous model, indicating a lack of sufficient votes commonly known to all honest nodes.

Because nodes outside the shard did not witness the voting within the shard, they cannot determine whether $A$ or $B$ received enough votes first, preventing the system from reaching an unequivocal consensus.

One might argue that HotStuff~\cite{yin2019hotstuff} uses the leader node to transmit the votes with threshold signatures, thereby avoiding vote discrepancies. However, to achieve unequivocal consensus, it would require that at least $(S \times M) + 1$ or $(\max(2L, S) \times M) + 1$ nodes, depending on the communication model, follow the voting rules between phases, which is unachievable.
\end{proof}

Thus, for these shards to function safely without external censorship, each honest node would need to initiate an instance of Byzantine Reliable Broadcast (BRB) to share the votes it has received. However, the partially synchronous BRB only functions with $f\leq \lfloor (M-1)/3 \rfloor$~\cite{abraham2021good}, which contradicts the setting of $f$ in the shard. The synchronous BRB allows for an adversarial majority, but it requires $(\frac{M}{M-f}-1)\Delta$ per vote, where $\Delta$ is the known communication time-bound. Therefore, only synchronous shards with $S\geq \frac{1}{2}$ may function independently, but they take a long time to finalize the voting process.

Gearbox uses a design in which each shard regularly posts a heartbeat (a collection of vote signatures) to a ``control chain'' that includes all nodes in the system. This design serves solely the purpose of detecting liveness. However, a new epoch of a shard does not wait for such a heartbeat to appear in the block of the ``control chain.'' This oversight disregards the possibility of equivocation views among honest nodes. 

Since there is no consensus on the set of vote signatures (heartbeats), the leader node may mistakenly consider a shard as dead while other honest nodes believe otherwise. These circumstances may lead to multiple iterations of the consensus-reaching process in the ``control chain'' until the honest nodes are aligned. In the worst situation, the system's security will be compromised as suggested by Lem.~\ref{fupdate}.

\begin{lemma}
The equivocation problem is contagious; finalizing a decision with the possibility of equivocation is not safe. \label{fupdate}
\end{lemma}

\begin{proof}
The validity property of any binary consensus protocol (e.g., PBFT~\cite{castro1999practical}) requires all honest nodes to hold the same input value before consensus on the input value can be reached. As demonstrated in Lem.~\ref{llemma1}, honest nodes may encounter equivocation, thus not holding the same input value.

Some equivocation can be resolved: for example, assume that an honest node, Alice, knows that an acceptance decision for a block $A$ of a shard has been reached, while another honest node, Bob, knows that an acceptance decision for a different block $B$ of the shard has been reached. In this scenario, if the leader node of the control chain proposes to accept either $A$ or $B$ with the votes attached, both Alice and Bob can safely accept the block suggested by the leader node, as more than a certain ratio $S$ of nodes have approved the decision.

However, some equivocation cannot be resolved. For instance, if the leader node claims that no decision has been received in the shard, neither Alice nor Bob will accept the control chain block because they know that decisions were made. Nevertheless, if all other honest nodes did not receive a decision from the shard and voted for the acceptance of the control chain block, and if two adversary nodes also voted for the acceptance of the control chain block, the control chain block can still be approved even though Alice and Bob did not accept it. This results in an equivocation problem because there is no guarantee that the delivery of the two adversarial votes will arrive on time to all honest nodes.

Therefore, the system cannot safely decide on a control chain block that contains equivocated decisions.
\end{proof}

The only way to avoid the unsolvable equivocation problem discussed in the proof of Lem.~\ref{fupdate} is to approve whatever the leader node of the control chain proposes, as it is impossible to deterministically decide if a shard has lost liveness. However, this approach allows an adversarial leader node to arbitrarily rebuild a shard that, in reality, has not lost liveness, thereby introducing additional safety vulnerabilities.

\begin{theorem}
Reticulum is secure when operating with $L=0\%$ and $S<100\%$ process shards and $L,S<50\%$ control shards.
\end{theorem}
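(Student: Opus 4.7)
The plan is to decompose a potential safety violation into two exhaustive cases corresponding to how a process block is finalized, and show that neither case can breach safety with probability greater than $2^{-\sigma}$. In Case one, a process block is unanimously accepted in the first phase. Since $L=0\%$ and $S<100\%$, a unanimous acceptance means every member of the process shard voted approve; by the shard sizing in Sec.~\ref{5.1.1}, the probability that a particular process shard contains any adversarial member is at most $P_{fp}$, and a union bound over all $\beta$ process shards is absorbed into $(P_f)_{threshold}=2^{-\sigma}$. So with overwhelming probability, the unanimous vote consists entirely of honest signatures, ruling out an adversarial block. In Case two, the block fails to obtain unanimous approval and is forwarded to the control shard; here safety reduces to the safety of a classical BFT shard with $L,S<50\%$, whose failure rate is bounded by $P_{fc}$ in Sec.~\ref{5.1.1} and also absorbed into $2^{-\sigma}$.

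Next, I would address the central obstacle: the equivocation attack identified in Lemma~\ref{llemma1}. That lemma applies to process shards in isolation because outside observers cannot tell which of two conflicting tallies arrived first. I would argue that in Reticulum this pathway is closed because the first-phase votes are disseminated through the $(\Delta+\delta)$-BB protocol with $f\leq \lfloor (N_c-1)/2\rfloor$ to the governing control shard. By Theorem~\ref{lemma2}, all honest members of the control shard terminate the broadcast on an identical multiset of votes, so every honest control-shard node reaches the same verdict about whether unanimity was achieved within $T_1$. Consequently, an equivocation view cannot persist: either the control shard uniformly certifies unanimity (Case one), or it uniformly observes non-unanimity and proceeds to its own majority vote in the second phase (Case two). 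The adversary therefore cannot cause two distinct honest nodes to treat two different process blocks of the same epoch as finalized.

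Then I would close the argument by showing that liveness cannot be used to indirectly subvert safety. By Corollary~\ref{l1} and the $\tau$-liveness guarantee of Definition~\ref{Theorem}, any process-shard member that silently blocks unanimous voting beyond the allowed frequency, or that votes for a faulty block, is identified in consensus (Theorem~\ref{lemma2}) and expelled with its PoS confiscated. After expulsion, the invariant of Theorem~\ref{tt1} still holds because the expelled nodes are not counted among the $\lceil 2(N-1)/3\rceil+1$ constantly-uncorrupted participants. Hence the control shard continues to satisfy $L,S<50\%$ and remains safe on subsequent epochs, and the process shards continue to be able to seek unanimous consensus among the surviving members.

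The hard part, and the step I expect to require the most care, is the equivocation closure in the middle paragraph: one must argue that the control shard's agreed-upon view of the first-phase votes is not merely consistent among honest control-shard nodes, but is also the \emph{only} view that can propagate to future epochs, so that an adversary cannot later replay a minority tally to convince an honest node that some earlier block had in fact been accepted. This is where the interplay between the BB guarantee, the state-block-state structure (which anchors each epoch to a control-block-certified state rather than to a free-floating process-block hash), and the finality of the control block must be combined; the remaining probabilistic and structural pieces then follow from the shard-sizing calculations and from Corollary~\ref{l1}.
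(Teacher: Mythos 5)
Your middle paragraph is exactly the paper's own argument: the paper proves this theorem solely by closing the equivocation pathway of Lemma~\ref{llemma1}, observing that process shards never function independently --- their first-phase votes are carried to the governing control shard by the $(\Delta+\delta)$-BB protocol with $f\leq \lfloor (N_c-1)/2\rfloor$, so all honest control-shard nodes hold an identical view of the tally and no conflicting finalization can survive into the next epoch. Your additional paragraphs (shard-sizing probabilities, expulsion and $\tau$-liveness) import material the paper proves separately in Sec.~\ref{5.1.1}, Corollary~\ref{l1} and the appendix safety theorem; that makes your write-up more self-contained than the paper's short proof, and those pieces are consistent with what the paper establishes elsewhere.

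There is, however, one concrete error in your Case one. You assert that ``the probability that a particular process shard contains any adversarial member is at most $P_{fp}$'' and conclude that a unanimous vote consists entirely of honest signatures. That is not what $P_{fp}$ bounds. The paper sets $N_p=\log_{P_a}(P_{fp})$, i.e.\ $P_{fp}=P_a^{N_p}$ is the probability that \emph{every} member of the process shard is adversarial --- a shard with $S<100\%$ is ``compromised'' precisely when it has no honest member. With the paper's parameters ($P_a=33\%$, $N_p=21$) the probability that a shard contains at least one adversary is close to $1$, so your stated premise is false and the ``all signatures are honest'' conclusion does not follow. The correct argument for Case one is the complementary one: with probability at least $1-P_{fp}$ the shard contains at least one honest node, and that single honest node withholds its approval from any invalid block, so unanimity can only be reached on a valid block. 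The conclusion you need survives, but only after replacing the premise; as written, this step is wrong.
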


\begin{proof}
Reticulum does not suffer from the security breaches indicated in Lem.~\ref{llemma1} because the process shards do not function independently. The blocks of the process shards in Reticulum are confirmed in the corresponding control shard before the next epoch starts. The votes of the process blocks are broadcasted using synchronous BRB with $f\leq \lfloor (N_c-1)/2\rfloor$ to all nodes in the corresponding control shard. Therefore, the honest nodes in the control shard are natively aligned and it is not possible to have consensus equivocation. We show in Sec.~\ref{BRB} that such a protocol only requires $4\Delta$ in overall and at the communication level of $O({N_c}^2)$. 
\end{proof}

Reticulum maintains the control shards instead of a ``control chain'' of Gearbox. This is to ensure that the vote broadcasting only involves $N_c$ nodes, instead of $N$ nodes.

\subsection{Analysis of Overhead in Byzantine Reliable Broadcast Protocol Utilization}\label{BRB}
In existing sharding protocols, they adopt BFT protocols like PBFT~\cite{castro1999practical} to build replicae, which incurs $O(M^2)$ communication complexity, as it requires nodes to forward the proposal (block) to avoid equivocation. There is no need to forward votes and is safe to make a decision when a block receives $f+1$ identical votes (assume synchronous communication) as it is impossible for a different verdict to receive more than $f$ votes. When a consensus decision is made for a proposal, it only guarantees that the votes exceeded $f$, but not a consensus for the exact set of votes. 

The first phase of Reticulum requires a determined set of votes to determine adversarial nodes, therefore, each node $i$ utilizes a Byzantine reliable broadcast protocol called $(\Delta+\delta)$-BRB~\cite{abraham2021good} to broadcast its vote $V_i$, which requires four steps: 

\begin{enumerate}
    \item Step 1: The broadcaster (node $i$) sends the proposal ($V_i$) with $O(N_c)$ complexity.
    \item Step 2: Everyone votes for and re-transmit $V_i$ with $O({N_c}^2)$ complexity.
    \item Step 3: Everyone commits and locks with $O({N_c}^2)$ complexity.
    \item Step 4: A Byzantine agreement is conducted, which incurs an overhead of $O({N_c^2})$.
\end{enumerate}

Therefore, $(\Delta+\delta)$-BRB demands $4\Delta$ to complete and the communication complexity for determining a $V_i$ in consensus is at the level of $O({N_c}^2)$. Thereby, the overall complexity for the BFT process in Reticulum reaches $O({N_c}^3)$, which is a big overhead.

To optimize this, we design that step 2 for all instances of $(\Delta+\delta)$-BRB are combined together. It is safe to do so as step~1 must end at a fixed time-bound $\Delta$. In this combined design, a vote in step~2 is a collection of votes for the proposals in step~1. When step~2 are combined, steps~3 and 4 are automatically combined.

Then Raticulum's two-layer consensus incurs the same complexity level: $O({N_c}^2)$ for send and forwards the proposal and $O({N_c}^2)$ for send and forward all the votes for the proposal. Therefore, Raticulum's communication overhead is still at the level of $O({N_c}^2)$, just with a bigger constant than PBFT or other BFT protocols.
\subsection{Attacks on liveness with different $\tau$}
This section analyzes the trade-off between the requirement for the adversarial nodes and the system performance. When the adversary has more than $\tau$ nodes in a process shard, the adversary can permanently stop the process shard without being expelled from the system. This is achieved by each adversarial node taking terms to not participate in voting in the first phase of the two-layer consensus or the adversarial leaders propose incorrect blocks. Therefore,  $\tau\geq N_p$ should be set to guarantee the eventual liveness of the process shard. The rate for successfully accepting a process block within the process shard is
\begin{equation}
    R_p=\frac{\tau-a}{\tau}\label {etau}
\end{equation}
where $\tau\geq N_p$ and $a$ is the adversary nodes in the process shard. We allow $N_p-1$ adversarial nodes in this shard in the worst case. Thus, considering the worst case, $R_p=\frac{\tau-N_p+1}{\tau}$.

Fig.~\ref{fig:tau} shows different $\tau$ corresponding to different $N_p$ and $R_p$. We show some numerical results: considering the worst case $R_p$, to maintain a success rate $R_p=40\%$, $R_p=70$ or $R_p=90\%$, an adversarial node in a process shard sized $N_p=15$ may go offline for one epoch in each 24, 47 or 140 epochs respectively. This is a reasonable relaxation compared to the constant liveness requirement.

\begin{figure}[h!]
    \centering
    \includegraphics[width=0.5\textwidth]{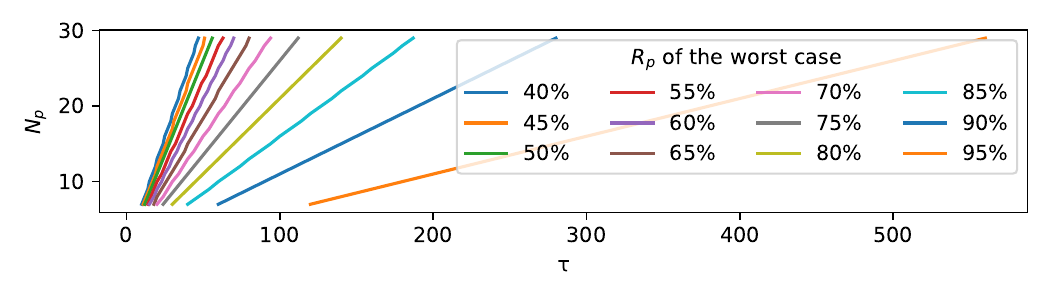}
    \caption{$\tau$ corresponding to $Rp$ of the worst case}
    \label{fig:tau}
\end{figure}

If the adversary wishes to attack liveness and be expelled, it can stop every process shard and the first phase of the overall system for around $2a$ epochs. Afterward, each process shard maintains $R_p=1$. 

\noindent\textbf{Discussion over $\tau$ liveness.}
The general security assumption for Rapidchain assumes that at any given moment, at least 2/3 of the computational resources belong to uncorrupted participants. It infers the constant liveness for the same set of honest participants as shown in Thm.~\ref{tt1}. Without the constant liveness guarantee, the verdict reached is not deterministic, for the similar reason shown in the proof of Lem.~\ref{llemma1}. 

If we wish to apply $\tau$ liveness also to the honest nodes, the verdict reached must go through one round of Byzantine reliable broadcast tolerating a majority adversarial population before the next round of voting may start. which incurs significant costs. Therefore, the same as all the existing approaches, we just assume the constant liveness for the honest nodes.

Admittedly that allowing a larger $\tau$ (inferring a stronger $R_p$) would bring the stricter liveness requirement for the adversarial (unstable) nodes. One practical relaxation, in reality, is to design that the nodes are only being given a long suspension when identified as adversarial nodes. They may participate in voting after the prison term.

\subsection{Analytical performance comparison with Rapidchain }\label{theoratical_rapidchain}
This section provides the performance comparison between Reticulum and Rapidchain. Because there is no source code for Rapidchain available for experiments, this section provides an analytical comparison.

The two-layer consensus of Reticulum will first reject the process block which contains wrong information in the process shard and then again in the control shard. The nodes which proposed these blocks will be marked as adversarial nodes and be expelled. Therefore, in general terms, we would expect that the process blocks are correct, and it is only a matter of being accepted at the process shard or at the control shard. Therefore we may approximate the throughput (the number of transactions processed per second) and the storage requirement (the storage per transaction) as follows, please find some notations for this section in Tbl.~\ref{tab:experimentnotations}:

\begin{table}
\caption{Some Notation for the calculation}
\label{tab:experimentnotations}\small
\begin{tabular}{cl}
Notation &Description                                                                                                                                                                                                                                                                                     \\ 
\hline
$E_{shard}$              & \begin{tabular}[l]{@{}l@{}}$E_{shard}=\lfloor N_c/N_p\rfloor$ denotes the number of process\\shards within a control shard.\end{tabular}                                                                                                                                                                               \\\hline
$E_{time}$               & \begin{tabular}[l]{@{}l@{}}$E_{time}$ denotes the length for one blockchain epoch.\\ $E_{time}=T_1+T_2$ for Reticulum.\end{tabular}                                                                                                                                                                                            \\\hline
$B_s$                    & \begin{tabular}[l]{@{}l@{}}  $B_s=2MB$ denotes the block size of the process\\block. It contains 4096 transactions.\end{tabular}                                                                                                                                                                  \\\hline
$E_{tx}$                 & \begin{tabular}[l]{@{}l@{}}$E_{tx}=E_{shard} \times 4096$ denotes the number of overall\\ transactions per epoch logged to the blockchain \\within a control shard in Reticulum or a shard in\\ Rapidchain.\end{tabular}                                                                                                        \\\hline
$N_i$                    & \begin{tabular}[l]{@{}l@{}}$N_i$ indicates the number of nodes that store the \\$i$-th process block in the current epoch.\\$N_i$ is determined by whether the process shard $i$ \\reached a consensus in the first phase. If the \\process shard unanimously agreed on its process \\block, $N_i=N_p$; otherwise, $N_i=N_c$.\end{tabular} \\
\hline\end{tabular}
\end{table}

\noindent \textbf{Throughput.}{
We calculate \begin{equation}\label{eq7}
    Throughput=\frac{E_{tx}}{E_{time}}
\end{equation} $R_p$ affects the length of $T_2$, so it affects the throughput.
}

\noindent \textbf{Storage.}{ $S_{tx}$ measures the overall storage incurred in the nodes when the blockchain within a control shard logged one transaction. 
We calculate 
\begin{equation} \label{eq8}
    S_{tx}=\sum_{i=1}^{E_{shard}}\frac{B_{s}\times N_{i}+(N_{i}-N_p)\times State_i}{E_{tx}}
\end{equation} where $B_{s}\times N_{i}+(N_{i}-N_p)\times State_i$ calculates the combination of the storage incurred for all the nodes that stored process block $i$. $State_i$ is the size of the current state of process shard $i$. $R_p$ determines $N_{i}$, so it also affects $S_{tx}$.
 }

To illustrate the relationship between $R_p$ and the performance and compare it with Rapidchain, we perform one mathematical simulation. In this simulation, we set $(P_{f})_{threshold}=10^{-7}$, $\lambda=50$, $P_a=33\%$, $N=5000$, $N_c=329$ and $N_p=21$. Therefore, Reticulum has 15 control shards and each control shard has $E_{shard}=15$ process shards inside. Each shard in Rapidchain maintains $L, S<50\%$ so it sized $N_c$ nodes, the same as the control shard of Reticulum. Therefore, Rapidchain has 15 shards. Because Rapidchain only uses one layer consensus, its performance is dependent on the size of its blocks and the number of shards. The two protocols have different settings. In order to relate the two works, we set the same upload bandwidth for broadcasting blocks. We also set the same $E_{tx}$ for both approaches, so a block of a shard in Rapidchain sized $B_s\times N_p$, containing $E_{tx}$ transactions. We make $E_{time}$ different for the two approaches. 

In Reticulum, the upload bandwidth requirement for the nodes sending the process block and the latest states to all nodes of the same control shard is denoted as $UB$. This may happen in the second phase of the two-layer consensus. 

1) \textit{In the worst scenario} when all process shards within the control shard fail to obtain the unanimous approval within $T_1$, then $T_2 =(E_{shard}+1)\times \lambda=800 s$ and $UB = \frac{B_s \times N_c +(N_c-N_p)\times state }{T_2-\Delta}$. We assume that each $state$ size $256 KB$. It is intended not to define the structure of the state in this paper, as that is application-oriented. Here we assume, each state contains $10922$ wallet addresses (160 bits each) and their balances within the process shard (32 bit each). 

2) \textit{In the best case scenario}, all process blocks obtained the unanimous approval in the first phase, then $T_2=50 s$ and there is no need to post any process block to the control shard. Assuming $\Delta=10 s$, the upload bandwidth requirement to send the process block to the control shard in the worst case is, therefore, $UB\approx 952.708 KB/s$. With this upload bandwidth, it is reasonable to set $T_1=\frac{B_s\times N_p}{UB}+4\Delta \approx 86 s$.  Then in the worst case, one epoch for Reticulum lasts $E_{time}=T_1+T_2=886s$. In the best case, it lasts $136s$.

To use the same upload bandwidth requirement for the leader node of Rapidchain, $ UB=\frac{B_s\times N_c}{E_{time}+\Delta}$. $E_{time}=\frac{B_s\times N_c}{UB}-\Delta \approx 698s$, which is similar to $T_2$ of the worst case. In this setting, Rapidchain has a constant storage per transaction $S_{tx}=\frac{B_s\times N_c\times N_c}{E_{tx}}\approx 3608 KB$, it does not need to sync the states and has a constant transaction per second of $\frac{E_{tx}}{E_{time}}\approx 88.022 tx/s$ for a shard and $1320.330 tx/s$ in the overall system (15 shards).  Note that $S_{tx}$ refers to the overall storage incurred in the network instead of the storage incurred for an individual node. 

\begin{figure}[h!]
    \centering
    \includegraphics[width=0.5\textwidth]{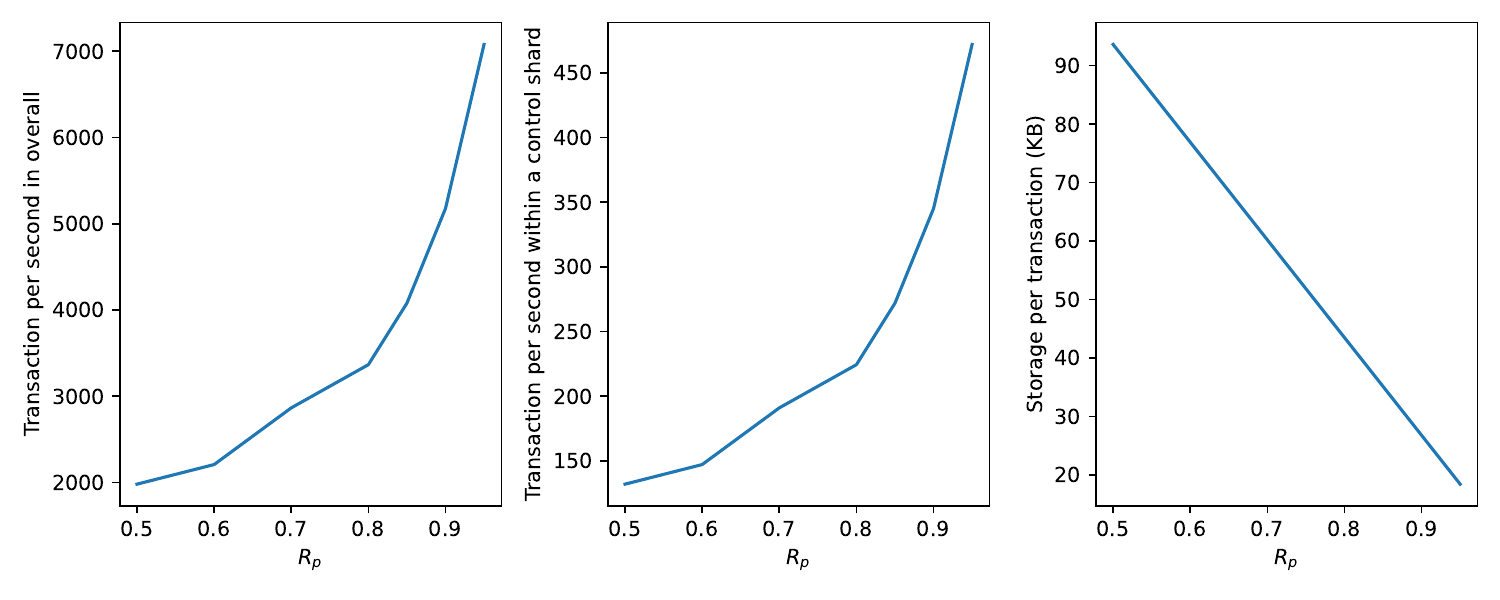}
    \caption{The theoretical throughput and storage considering different $R_p$, with $UB=952.708 KB/s$}
    \label{fig:theoratical_rapidchain}
\end{figure}

Fig.~\ref{fig:theoratical_rapidchain} shows the simulation for the theoretical performance of Reticulum with different $R_p$. $E_{time}$ is calculated according to the given $UB$. It significantly outperformed Radpichain as, in theory, Rapidchain can be seen as the design when Reticulum deteriorates to the one-layer design that all process shards constantly failed. With $P_a=33\%$, $N_p=21$ and if set $\tau=40$, we get $R_p=50\%$ (assuming the worst case). In this case, the transaction per second for the overall system is $1980.5 tx/s$, approaching the double performance of Rapidchain and only incurs $93.62 KB$ storage per transaction overall in the network. When setting $\tau=411$, we get $R_p=95\%$ (assuming the worst case). In this case, the transaction per second for the overall system is $7077.645 tx/s$ and only incurs $18.4099 KB$ storage per transaction (as more transactions are only stored in the process shard) in overall in the network. Fig.~\ref{fig:theoratical_rapidchain} also confirms the general assumption of the larger $R_p$ brings the stronger performance.   

Note that we do not consider cross-shard transactions. As mentioned in Sec.~\ref{cross}, we may plug in different cross-shard transaction protocols and may use exactly the same design as Rapidchain. There are no different properties in this field between the two protocols.
\begin{figure*}[h!]
    \centering
    \includegraphics[width=\textwidth]{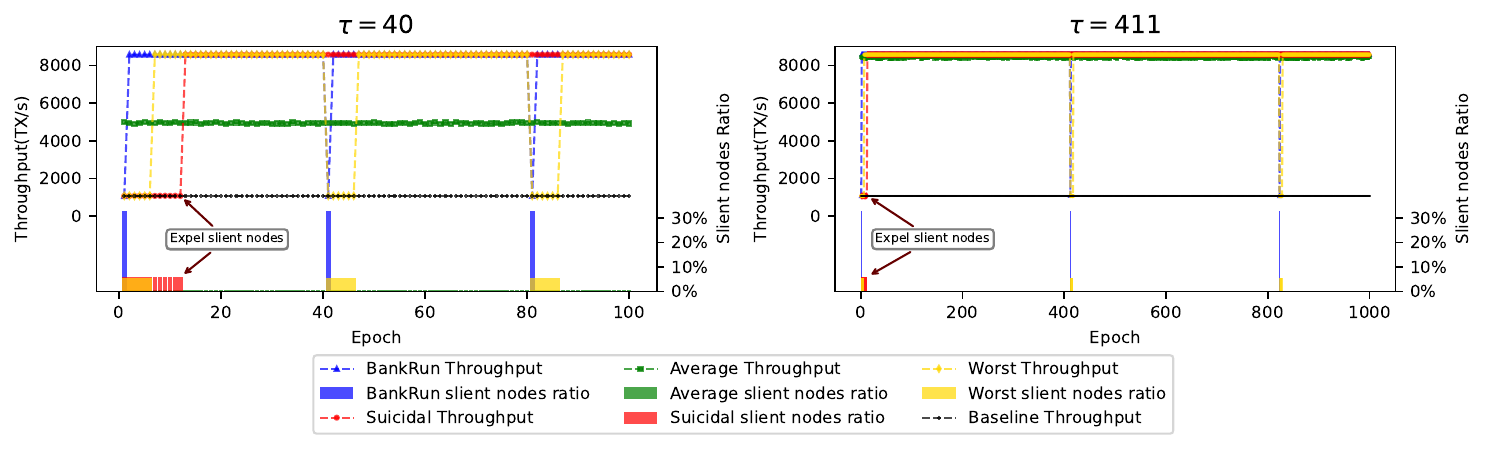}
    \caption{The experiment result of a system of Reticulum with $(P_{f})_{threshold}=10^{-7}$, $N=5000$, $N_c=329$, $N_p=21$, $P_a=33\%$, $T_1=86s$, $\lambda=800$, $B_s=2MB$ (a block has 4096 transactions) and different $\tau$ and the result of Baseline with $(P_{f})_{threshold}=10^{-7}$, $N=5000$ and $E_{time}=698s$ and each shard sized $329$. Silent nodes stand for the nodes (globally not just within a shard) that did not vote for the process shard at the given epoch. The experiment setup matches the simulation setup we did in Sec.~\ref{theoratical_rapidchain}. As can be seen from the picture, Reticulum outperforms Baseline in all cases. When $\tau$ is larger, the attack $worst$ occurs less frequently. We need at least $\tau$+1 epochs to illustrate BankRun attacks. Therefore, the experiment last $100$ epochs for $\tau=40$  and $1000$ epochs for $\tau=411$. }
    \label{fig:RapidchainThroughput}
\end{figure*}

\section{Experiment}
This section provides a combination of experiments and mathematical calculations to evaluate the key characteristics of Reticulum and compare it with a Rapidchain-like protocol and also with Gearbox. The Rapidchain-like protocol (baseline) refers to an ordinary one-layer synchronous sharding protocol. We implement both baseline and Reticulum in Golang. Gearbox was evaluated purely via simulations on mathematical models since the authors provided no source code, and some detail designs not related to sharding are not presented, rendering experiments infeasible.

\noindent \textbf{Experiment setup.} We experiment Reticulum and baseline but simulate Gearbox. It is fair (maybe a little unfair for Reticulum) to make a comparison in this way, as we set the communication delays used in the mathematical simulation the same as we observed in the experiment environment for Reticulum. Our experiment was conducted on fifteen servers each equipped with 32-core AMD EPYC 7R13 processors, providing 128 vCPUs running at 3.30 GHz, 256 GB of memory, and a network bandwidth of 10 Gbps. We added a 200ms delay to each message to simulate the geographic distribution of nodes.

\noindent \textbf {Attack strategies.} We consider four kinds of attacks from the adversary. $BankRun$, where all adversarial nodes do not vote for the process blocks at a single epoch. $BankRun$ can only occur once in every $\tau$ epochs. $Average$, where each adversarial node does not vote once in a random epoch in every $\tau$ epochs. $Worst$, where only one adversarial node refuses to vote at each process shard in every epoch. The adversary can stop a process shard for $i<\tau$ epochs in every $\tau$ epochs where $i$ is the number of adversarial nodes inside this shard. $Suicidal$, is based on the $worst$ strategy but all adversarial nodes vote at most $\tau-2$ epochs in every $\tau$ epoch, and be expelled at the second time when they remain silent in voting. \\

\noindent \textbf{Comparison between Reticulum and Baseline.} Fig.~\ref{fig:RapidchainThroughput} suggests that Reticulum has superior performance.\\

\noindent \textbf{Comparison between Gearbox and Reticulum.} Gearbox assumes that the adversarial population in the system ($P_{a-run}$) is static, unknown but below $P_a$ where $P_{a-run}\leq P_a\leq 33\%$. It continually refines shard sizes and liveness thresholds until an optimal arrangement is attained, where the adversarial population is conquered.

To emulate the outcomes of the Gearbox system, we instantiate a network comprising 5000 nodes, with a portion represented by $P_{a-run}$ being designated as adversarial. These adversarial nodes are distributed randomly within the network. Nodes are subsequently assigned to shards through a random allocation process. We then assess each shard to determine whether the percentage of adversarial nodes exceeds the shard's predefined liveness threshold. In cases where this threshold is surpassed, we dissolve the existing shard and construct a new one with a higher gear setting. This process is iteratively executed until all shards maintain a percentage of adversarial nodes that falls below their respective liveness thresholds. 

Gearbox only discussed the mechanisms for when and how to rebuild a shard and to enlarge its size, which is very high level and lacks detailed designs. Therefore, we may not simulate the performance when rebuilding shards. But we do know how many overlapping shards a node is in after shards switch gears. Gearbox recommend the usage of four gears corresponding to liveness values of $10\%$, $20\%$, $25\%$ and $30\%$ within a shard. To align with our work, we additionally use a gear of $49\%$. The shard size corresponding to these gears are $26$, $39$, $50$, $63$ and $293$ respectively, this is calculated considering $P_a=33\%$. We repeat this entire procedure 1000 times, observing the distribution of shards across different gear configurations, as illustrated in Fig.~\ref{fig:overlapping}. Fig~\ref{fig:overlapping} also shows the number of shards in Gearbox that a node is simultaneously located in (overlap times) with the given adversarial ratio in runtime. When a node is in multiple shards, it has multiple workload.

With a given $P_{a-run}$ ratio of adversarial nodes globally, we simulate the following process: At the beginning, every shard is built at the same size using $L=10\%$. We check if a shard contains more than $L$ of adversarial nodes, if so we rebuild it with a larger gear. This process is repeated until there is no shard that has an adversarial population of more than its gear. We then know the overlapping situation. We calculate the performance accordingly. Fig.~\ref{fig:GearboxThroughputfull} shows, with different $P_{a-run}$, the comparison between Reticulum and Gearbox after every shard of Gearbox has found a stable gear.

\begin{figure}[h!]
\centering
\includegraphics[width=0.5\textwidth]{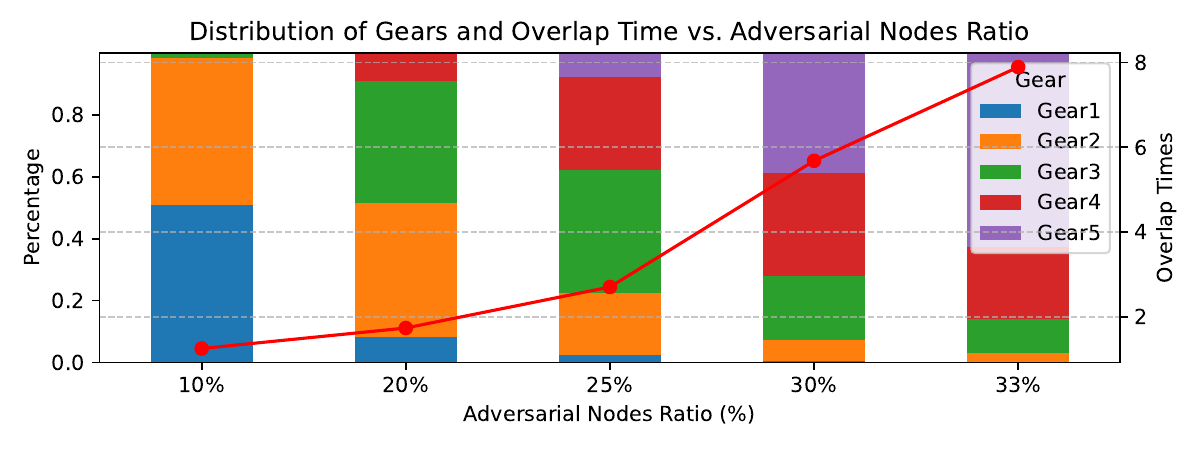}
\caption{The number of overlapping shards and the distribution of gears with different adversary ratio globally}
\label{fig:overlapping}
\end{figure}

As demonstrated in Sec.~\ref{breaches}, the shards within the Gearbox ecosystem face a critical challenge in achieving independent functionality for the majority of the time. Consequently, these shards must patiently await the finalization of their respective blocks within the control chain before commencing the next epoch. Failing to do so would expose them to the risk of consensus equivocation.

\begin{figure*}
    \centering
    \includegraphics[width=\textwidth]{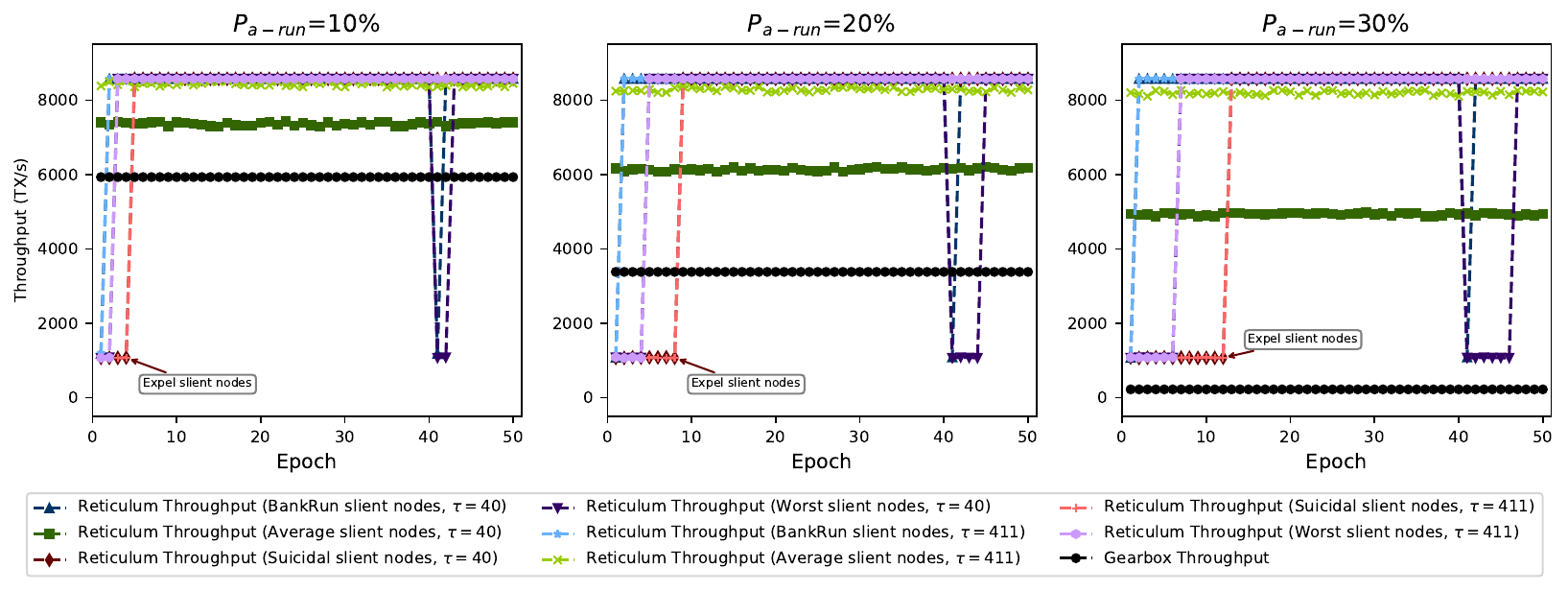}
    \caption{The experiment result of a system of Reticulum with $(P_{f})_{threshold}=10^{-7}$, $N=5000$, $N_c=329$, $N_p=21$, $T_1=86s$, $\lambda=800$, $B_s=2MB$ (a block has 4096 transactions) and $\tau=40$. Because Reticulum has methods to mitigate attacks on liveness and Gearbox does not have such features, all $P_{a-run}$ nodes will actively attack in Gearbox, but will attack according to the attack strategies in Reticulum as we outlined previously. The figure also shows the result of Gearbox with $(P_{f})_{threshold}=10^{-7}$, $N=5000$ and $B_s=2MB$ (a block has 4096 transactions) after each shard has found a stable gear. In the biggest gear $L=49\%$, a shard contains $329$ nodes, which reaches the same size of a control shard of Reticulum. 
    }
    \label{fig:GearboxThroughputfull}
\end{figure*}

One potential remedy involves configuring the block interval of the control chain to be shorter than that of each individual shard. This approach ensures that once a shard has accepted a block, and this decision has reached at least one honest node within the shard, the current leader node of the control chain can be promptly informed of this decision. This, in turn, allows for the incorporation of this decision into the next control block, substantially mitigating the risk of equivocation.

However, reducing the block interval of the control chain comes with a significant drawback: it imposes a substantial communication burden on network nodes. This is due to the fact that the acceptance of a control block necessitates the involvement of the entire system's nodes, which can be a considerable number. To address this issue, we have devised a design in which the entire system outputs only one control block for each block height. This ensures that the length of the blockchain within the shards remains identical to that of the control chain. Consequently, each shard must await the acceptance of the next control block before initiating a new blockchain epoch.

It is worth noting that in this simulation, we make the simplifying assumption of zero communication cost, implying that once the last shard reaches consensus, the control block is instantaneously generated and accepted by all nodes—a scenario that may be unrealistic in practice. Also, we should acknowledge that we do not account for the possibility of adversarial behavior by the leader node of the control chain, such as falsely claiming that certain shards are dead when they are still operational. Additionally, we do not address the potential scenario in which the control block is not accepted, which could introduce further complexities.

In terms of setting the time-bound for block generation and a round of voting within the shard, our approach closely follows the simulation design employed in the Gearbox paper. Within the Gearbox paper, it is mentioned that, in the context of the control chain, the latency in milliseconds for messages involving committees (comprising nodes) of size $s$ can be reasonably approximated using the linear functions $l = 0.37s + 6$. Similarly, for the shard, the approximation is $l = 0.67s + 20$. Unfortunately, the paper does not delve into the specifics of how these numerical values were derived.

In our paper, we also take into account the latency incurred when disseminating data, and we model this latency as linearly proportional to the number of nodes involved. To ensure the time-bound for shard-related activities is greater than the latency incurred during message transmission, we've established a relationship that satisfies $l = 2.12x$ where $x$ represents the number of nodes within the shard. Note that this estimation does not consider the potential slowdown when a node is in multiple shards and have multiple workload in parallel.  $2.12$ is chosen base on the fact that Rapidchain would be able to generate a block and conduct one round of voting that involves $329$ nodes in a shard within $698s$, which is measured in runtime in the same experiment environment for Reticulum and Rapidchain. Consequently, the time-bound for block generation and the subsequent voting rounds are set at $55.12s$, $82.68$, $106s$, $133.56s$, and $621.16s$, corresponding to the various shard sizes in different gears. This approach ensures that the time intervals are both consistent with the Gearbox paper's principles and suitably adjusted to account for the latency incurred in our specific context. Fig.~\ref{fig:GearboxThroughputfull} shows the performance of the comparison between Geaxbox and Reticulum in terms of throughput and Fig.~\ref{fig:GearboxStorage} shows that for the storage. The storage and download bandwidth comparison between Reticulum and Baseline are given in Fig.~\ref{fig:RapidchainStorage} and Fig.~\ref{fig:RapidchainBandwidth} respectively.

\begin{figure*}
    \centering
    \includegraphics[width=\textwidth]{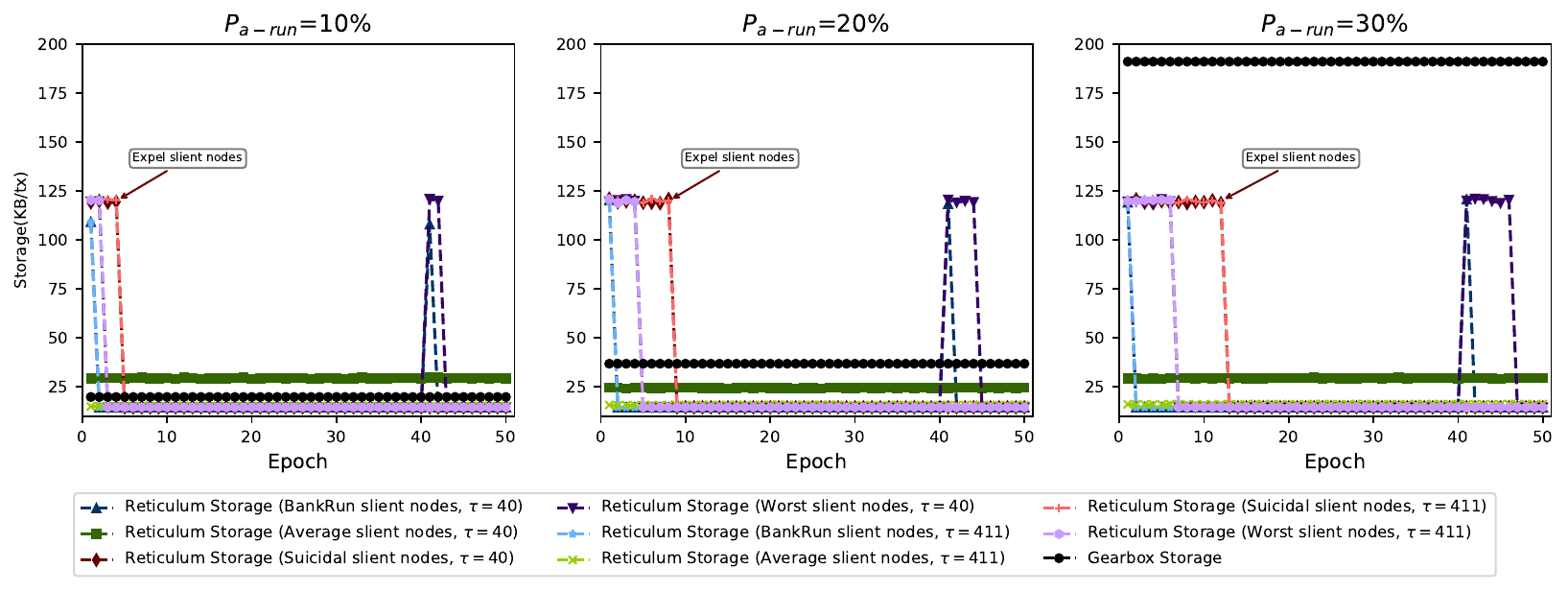}
    \caption{The experimental results in terms of storage occurred over the entire network of Reticulum with $(P_{f})_{threshold}=10^{-7}$, $N=5000$, $N_c=329$, $N_p=21$, $T_1=86s$, $\lambda=800$, $B_s=2MB$ (a block has 4096 transactions) and $\tau=40$. The figure also shows the result of Gearbox with $(P_{f})_{threshold}=10^{-7}$, $N=5000$ and $B_s=2MB$ (a block has 4096 transactions) after each shard has found a stable gear. As can be seen from the picture, Reticulum uncontestedly outperforms Gearbox in all cases. Note that the storage does not refer to the storage for a single node, but the storage occurred over the entire network.}
    \label{fig:GearboxStorage}
\end{figure*}

\begin{figure*}
    \centering
    \includegraphics[width=\textwidth]{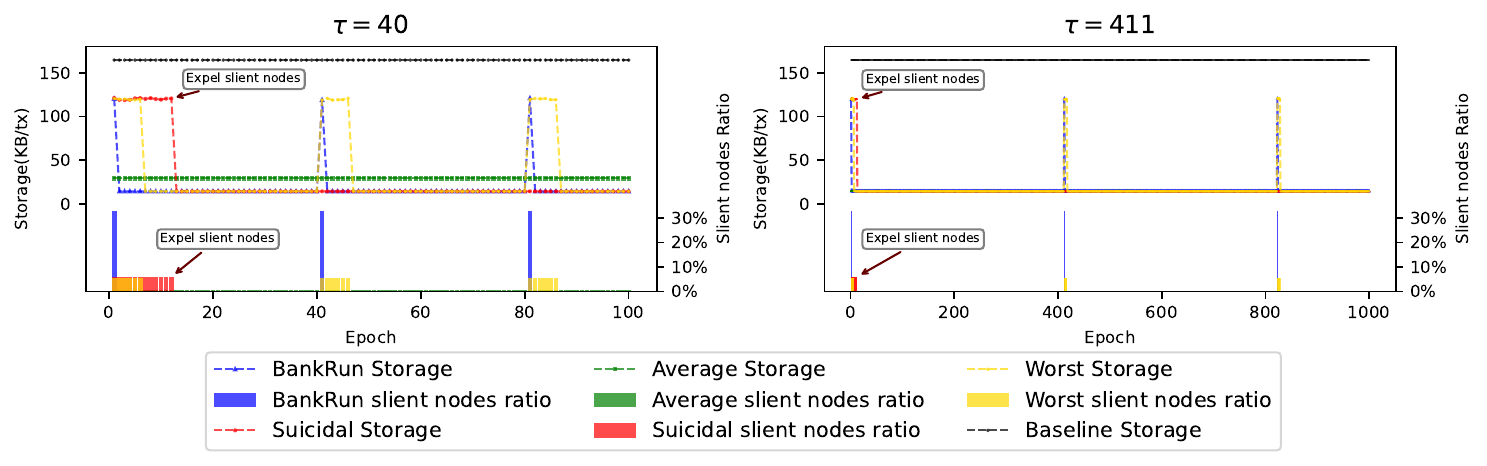}
    \caption{The experiment result (storage) of a system of Reticulum with $(P_{f})_{threshold}=10^{-7}$, $N=5000$, $N_c=329$, $N_p=21$, $P_a=33\%$, $T_1=86s$, $\lambda=800$, $B_s=2MB$ (a block has 4096 transactions) and different $\tau$ and the result of Baseline with $(P_{f})_{threshold}=10^{-7}$, $N=5000$ and $E_{time}=698s$ and each shard sized $329$. $B_s=2MB$ (a block has 4096 transactions). Silent nodes (globally not just within a shard) stand for the nodes that did not vote for the process shard at the given epoch. The experiment setup matches the simulation setup we did in Sec.~\ref{theoratical_rapidchain}. As can be seen from the picture, Reticulum outperforms Baseline in all cases. When $\tau$ is larger, the attack $worst$ occurs less frequently. We need at least $\tau$+1 epochs to illustrate BankRun attacks. Therefore, the experiment last $100$ epochs for $\tau=40$  and $1000$ epochs for $\tau=411$.  Because transactions are handled in the process shard level, which consists of fewer nodes than a shard in Baseline, nodes keep fewer transactions compared to Baseline. Note that the storage does not refer to the storage for a single node, but the storage occurred over the entire network.}
    \label{fig:RapidchainStorage}
\end{figure*}
\begin{figure*}
    \centering
    \includegraphics[width=\textwidth]{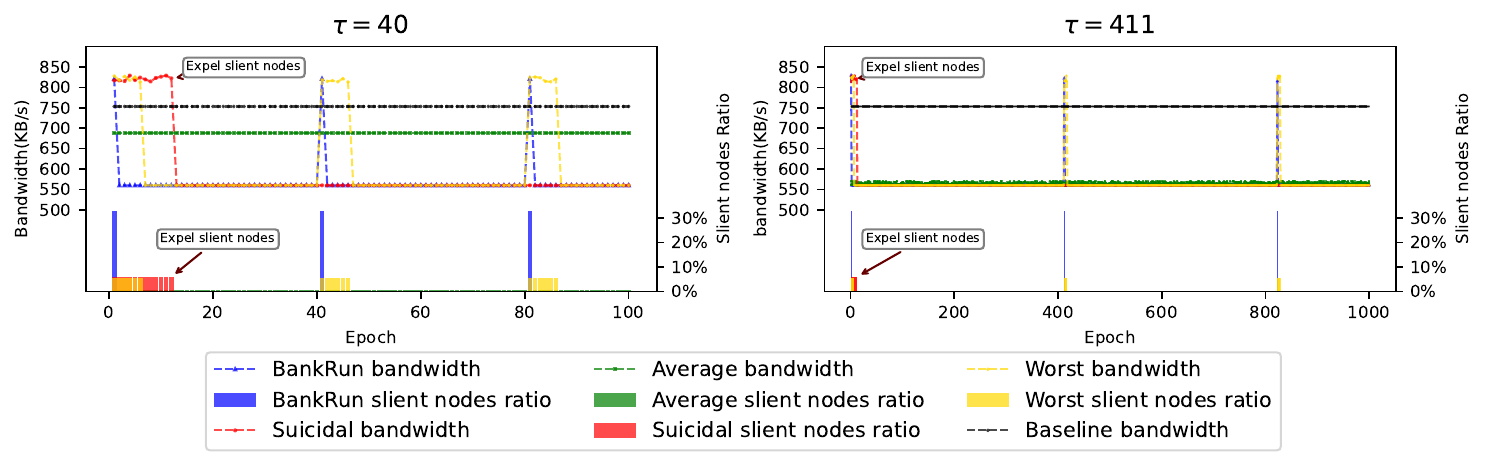}
    \caption{The experiment result (download bandwidth) of a system of Reticulum with $(P_{f})_{threshold}=10^{-7}$, $N=5000$, $N_c=329$, $N_p=21$, $P_a=33\%$, $T_1=86s$, $\lambda=800$, $B_s=2MB$ (a block has 4096 transactions) and different $\tau$ and the result of Baseline with $(P_{f})_{threshold}=10^{-7}$, $N=5000$ and $E_{time}=698s$ and each shard sized $329$. $B_s=2MB$ (a block has 4096 transactions). Silent nodes (globally not just within a shard) stand for the nodes that did not vote for the process shard at the given epoch.}
    \label{fig:RapidchainBandwidth}
\end{figure*}

\section{Related Work}
Several sharding protocols~\cite{luu2016secure,xu2020n,xu2023two,kokoris2018omniledger,zamani2018rapidchain,avarikioti2019divide} have been proposed in recent years to address scalability issues in blockchain systems. This section reviews some of the most representative ones. It should be noted that there are alternative approaches, such as Prism~\cite{bagaria2019prism} and OHIE~\cite{yu2020ohie}, which are limited to the Proof-of-Work (PoW) setting, selecting blocks via resource competition rather than a leader-and-vote-based consensus, and will not be discussed here.

A common issue of the works is that they only consider tolerating an upper bound of the adversarial population, which may not perform well when the adversarial population is lower than the upper bound. Recent efforts~\cite{xu2020flexible,david2022Gearbox} attempt to adjust the shards according to the adversarial population. The first work for sharding that uses $S>50\%$ to the knowledge of the authors, is proposed in 2020~\cite{xu2020n}, a proposal which seeks to achieve $f\leq (M-1)/2$ in a shard with $f\leq (N-1)/2$ in overall via node classification. However, the method's effectiveness depends on the adversary's strategy, which is inaccurately represented in the paper. Consequently, the protocol's reliability in real-world scenarios is uncertain. A protocol based on~\cite{xu2020n} enabling two-way adjustments to shard size based on the actual percentage of the adversarial population and the runtime workload, without overlapping shard membership, is proposed in~\cite{xu2020flexible}. However, the protocol is vulnerable to attacks by adversaries who frequently trigger adjustments in both directions, causing significant overhead for adjusting shards and data synchronization. Additionally, the paper inherits the error of~\cite{xu2020n}, which makes it unreliable. Gearbox~\cite{david2022Gearbox} is a protocol that enables one-directional adjustment of shard size based on the percentage of the adversarial population in the network. It operates under the assumption that the actual adversarial population in a shard is unknown but fixed and below the worst-case ($(L=S)<50\%$). Gearbox cannot determine the runtime adversary population; it can only approximate it by gradually ``switching gears.'' The system starts with a small shard size and hence a small $L$ value. If a shard loses liveness, Gearbox continuously increases the shard size and its $L$ value until it surpasses the actual adversarial population in the shards. At this point, the shard is considered alive and can output blocks. This approach, however, does not account for runtime fluctuations in the population of adversarial nodes or the presence of different adversaries over time. Maintaining a shard size that accommodates an all-time high adversarial population is inefficient when the percentage may vary at runtime. Simply making shard size adjustments in both directions to enable runtime increase and decrease of $L$ will not work in practice. Because the attacker can easily trigger a loop in ``switching gears'', resulting in frequent and costly shard adjustments. Additionally, overlapping shards can occur in Gearbox, because it only resizes shards without adjusting the number of shards. The overlapping shards can suffer from duplicated workloads, and more transactions need to involve more than one shard, resulting in more cross-shard transactions and less parallelism. To maintain non-overlapping shards, resizing them to form new ones when a single shard loses liveness is necessary. However, this needs to divide or merge transactions governed by the original shards into new ones, which is complicated and costly. Thus, despite the limitation, Gearbox opts for not adjusting the number of shards to avoid the aforementioned issues. Lastly, the previous two works~\cite{xu2020n,xu2020flexible} and Gearbox are insecure, as shown in Lem.~\ref{llemma1}.

\section{Conclusion}
In conclusion, the Reticulum sharding protocol presents a promising solution for enhancing the scalability of blockchain technology. By separately considering attacks on liveness and safety and providing methods for liveness attack inhibition, Reticulum provided superior performance. The two-phase design of Reticulum, consisting of control and process shards, enables nodes to detect adversarial behaviours and further split the workload and storage into tiny groups (process shards). The protocol leverages unanimous voting in the first phase to involve fewer nodes for accepting/rejecting a block, allowing more parallel process shards, while the control shard comes into play for consensus finalization and as a liveness rescue when disputes arise. Overall, the analysis and  experimental results demonstrate that Reticulum is a superior sharding protocol for blockchain networks.

{\bibliographystyle{acm}
\bibliography{bib}}

\appendix
\renewcommand{\thealgorithm}{\Alph{section}.\arabic{algorithm}}
\setcounter{algorithm}{2}
\subsection{Security analysis for the design of Reticulum protocol}
\subsubsection{Bootstrapping} \label{a1}

\textbf{Lemma.} \textit{
The bootstrapping phase guarantees the safety property of shard membership assignment, ensuring that no node is assigned to multiple shards.}
\begin{proof}
    To prove the safety property of the bootstrapping phase, we need to show that no node is assigned to multiple shards, i.e., there are no conflicts in shard membership assignments.

Suppose there exists a node $Node_j$ that is assigned to both $ps_{index_1}$ and $cs_{index_2}$, where $ps_{index_1}$ represents the process shard with index $index_1$, and $cs_{index_2}$ represents the control shard with index $index_2$. This would imply a conflict in shard membership assignment.

Let's consider the getShardIndex function and its steps:
\begin{enumerate}\small
    \item $C_{index}$ is the index of $Node_j$ in the sequence $C$. Since $C$ is generated using a random beacon and each node is assigned a unique index, there are no duplicate indices.
    \item $index_1$ is calculated as $\lfloor C_{index} / N_p \rfloor$. This calculation ensures that $Node_j$ is assigned to a specific process shard based on its index.
    \item $index_2$ is calculated as $\lfloor index_1 / (N_c/N_p) \rfloor$. This calculation ensures that $Node_j$ is assigned to a specific control shard based on its process shard index.
    \item If $C_{index}$ exceeds or is equal to $\lfloor N/N_c \rfloor \times N_c$, $index_2$ is decremented by 1 to avoid the creation of a new control shard. This adjustment ensures that nodes are not duplicated across multiple shards.
    \item If $N \nmid N_p$, additional nodes are added to the last process shard but not a new process shard. The calculation of $index_1$ limits the assignment of nodes to the last process shard, avoiding duplication.
\end{enumerate}

Based on the steps of the getShardIndex algorithm, we can conclude that the assignment of a node to a process shard and a control shard is unique and does not conflict with any other shard assignments. This guarantees the safety property of the bootstrapping phase, ensuring that no node is assigned to multiple shards.

Therefore, the bootstrapping phase design provides a safe and conflict-free mechanism for assigning shard memberships to nodes.

\end{proof}

\subsubsection{Two-layer Consensus Safety}
\label{a2}

\noindent\textbf{Theorem.}  The two-layer consensus protocol ensures safety by accepting blocks approved either unanimously in the first phase or by a majority vote in the second phase. Rejected blocks do not impact subsequent state evolution.

\begin{proof}
To prove safety:
\begin{enumerate}
\item Unanimously approved blocks in the first phase are considered safe and not processed further.
\item Blocks lacking unanimous approval proceed to the second phase for verification.
\item Only process blocks with majority approval in the second phase are accepted, ensuring safety.
\item Rejected blocks in the second phase have no impact on subsequent state evolution, maintaining safety.
\end{enumerate}

\textbf{Proof of Claim 1:} Unanimously accepted blocks are safe, as all nodes in the process shard agree within the time-bound $T_1$. No adversarial behavior guarantees unanimous approval.

\textbf{Proof of Claim 2:} Blocks lacking unanimous approval proceed to the second phase for verification, ensuring safety against potential adversarial nodes.

\textbf{Proof of Claim 3:} Majority approval in the second phase ensures safety by accepting process blocks marked as accepted by the leader node.

\textbf{Proof of Claim 4:} Rejected process blocks in an accepted control block do not alter subsequent state evolution, maintaining safety and consistency across epochs.

In conclusion, the two-layer consensus protocol provides a secure mechanism for distributed system consensus.
\end{proof}

\subsubsection{Cross-shard transactions} \label{a3}

\begin{algorithm}\label{a4}
    \caption{Handling of Cross-Shard Transactions}\label{cross-shard}
    \textbf{$ps_{receive}$}
\begin{algorithmic}[1]
    \Procedure{HandleTransaction}{$Tx_{cross}$}
        \If{signature is valid}
            \State Generate two transaction: $Tx_{cross1} \gets$ deduct from sender address; $Tx_{cross2} \gets$ deposit to recipient address;
        \Else
            \State Discard this transaction
            \State \Return
        \EndIf
        \State Send $Tx_{cross1}$ to $ps_{send}$
        \While{Get the Merkle proof of the process block that contains $Tx_{cross1}$ and the process block has been accepted}
        \State The current leader node adds $Tx_{cross2}$ to the process block of $PS_{receive}$ as a normal transaction. 
        \EndWhile
    \EndProcedure
\end{algorithmic}

\textbf{$ps_{send}$}
\begin{algorithmic}[1]
    \Procedure{Receive}{$Tx_{cross1}$}
    \If{signature is valid \& balance of sender is sufficient}
        \State The current leader node adds $Tx_{cross1}$ to the process block as a normal transaction
            \Else
            \State Discard this transaction
            \State \Return
        \EndIf
    \While{The block that contains $Tx_{cross1}$ has been accepted into the blockchain of $ps_{send}$}
        \State Send the Merkle proof of this process block to $ps_{receive}$
        \EndWhile
    \EndProcedure
\end{algorithmic}

\end{algorithm}

\noindent \textbf{Lemma.} \textit{The cross-shard transaction mechanism in Reticulum ensures the security and integrity of transactions between different process shards, providing transaction validity, consensus participation, proof of transaction, and a fixed shard membership approach.}

\begin{proof}
    
To prove the security of the cross-shard transaction mechanism in Reticulum, we examine each of the key aspects identified in the security analysis.

\textbf{Transaction Validity:}
The validity of each cross-shard transaction is ensured through signature verification and balance sufficiency checks. In Alg.~\ref{a4}, both the sender shard ($ps_{send}$) and recipient shard ($ps_{receive}$) validate the signature of the transaction and verify that the sender's balance is sufficient. Only transactions with valid signatures and sufficient balances are processed and included in the respective shards' blocks.

\textbf{Consensus Participation:}
Reticulum leverages a two-layer consensus process for including cross-shard transactions in the blocks of their respective shards. This ensures that the transactions are agreed upon by the participating nodes. The consensus process guarantees that the majority of nodes reach a consensus on the validity and order of transactions, preventing malicious actors from manipulating the transaction history.

\textbf{Proof of Transaction:}
The cross-shard transaction mechanism in Reticulum includes the Merkle proof of the block containing $Tx_{cross1}$ as proof to the recipient shard, $ps_{receive}$. This proof allows $ps_{receive}$ to verify the authenticity and integrity of the transaction. By validating the Merkle proof, $ps_{receive}$ can confirm that the sender has sufficient funds and that the funds have been deducted. This proof mechanism ensures the security and correctness of cross-shard transactions.

\textbf{Fixed Shard Membership:}
Reticulum's fixed shard membership approach simplifies the handling of cross-shard transactions for approaches that leverage liveness and security by avoiding the complexities of dead shards and shard membership changes. With a fixed set of nodes participating in the consensus process for an epoch, the same as the classical protocols like Rapidchain or Ominiledger, the system achieves stability and security. Reticulum's deadlock-free design much simplified the process and the time to finalize the cross-shard transactions.

Based on the above analysis, the cross-shard transaction mechanism in Reticulum ensures the security and integrity of transactions between different process shards. By enforcing transaction validity, leveraging consensus participation, providing proof of transaction, and employing a fixed node membership approach, Reticulum establishes a secure foundation for cross-shard transactions.

Therefore, we can conclude that the lemma holds, and the security of cross-shard transactions in Reticulum is assured.
\end{proof}
\subsection{Artifact Appendix}

\textit{This section shows the artifact description we submitted to the NDSS 2024 Artifact Evaluation committees. The committees evaluated a preliminary version of the artifact, granting the Functional and Reproduced badges. Later in time, the source materials of the artifact benefited from several enhancements and additions.}

This artifact appendix presents a detailed derivation process for figures ranging from Fig. 6 to Fig. 11 in the paper. While we aimed to provide comprehensive design details and open-source code for the Gearbox component, we had to resort to mathematical simulations to obtain results due to the unavailability of detailed design from the Gearbox paper. Conversely, we were able to gather results for Reticulum and Rapidchain through experimentation. We developed a protocol that replicate the relevant details of Rapidchain and denoted as Baseline. 

In the experiment section, we showcased the outcomes of a system consisting of 5000 nodes distributed across fifteen servers, each equipped with 32-core AMD EPYC 7R13 processors. These servers provided 128 virtual CPUs running at 3.30 GHz, 256 GB of memory, and a network bandwidth of 10 Gbps. However, it's important to note that the experimentation process incurred significant costs.

Regrettably, these experiments do not align with the artifact committee's requirements, which stipulate that proposed experiments should be executable in a single day and run on a standard desktop machine. As such, we've conducted a scaled-down experiment using only 45 nodes.

It's worth highlighting that the security of sharded blockchains heavily relies on a large number of nodes. Consequently, it was impossible to maintain the proposed $10^{-7}$ failure probability with just 45 nodes. \footnote{The failure proablity refers to the probablity for the adversarial nodes which are the minority in the system, accumulate enough nodes in a shard to manipulate the shard consensus.} In this reduced setup, we had to adjust the failure probability to $10^{-2}$. Under these conditions, we were able to safely implement one control shard comprising 45 nodes and three processing shards, each containing 15 nodes. Detailed calculations regarding the numerical relationship between the failure probability and the sizes of the control and processing shards can be found in Section IV.D of the paper.

It is essential to mention that experimenting with only one control shard is reasonable, as control shards operate independently and only communicate when cross-shard transactions occur. However, it's important to note that our paper does not include benchmarks for cross-shard transactions, and our experiments do not involve them either. Consequently, the transactions per second for the entire system can be easily calculated as a simple multiplication of the transaction rate within a single shard.

Furthermore, when considering storage requirements for nodes, we only take into account the storage that occurs within a control shard. Therefore, it is reasonable to perform testing with just one control shard.

\subsubsection{Downscaled Experiment for Rapidchain and Reticulum}
\begin{itemize}
    \item \textbf{Accessing the Resources.} We make these materials available to the community at \url{https://github.com/WaterandCola/Reticulum/}. We offered a cloud instance with an x86-64 CPU, 8 cores, and 16 GB of RAM (c5.2xlarge) during the artifact evaluation period.
    \item \textbf{Hardware Dependencies.} The experiment relies on a cloud instance with the following specifications: \\- x86-64 CPU\\ - 8 cores\\ - 16 GB of RAM
\item \textbf{Software Dependencies.} Ensure that the following software components are installed: \\- golang-1.19 \\- Ubuntu 20.04
\end{itemize}

\noindent \textbf{Explanation of Time-Bounds Settings.}
Both Reticulum and RapidChain operate as synchronous models. When a time-bound is reached, a decision is made based on voting, and the next epoch begins (unless the block is rejected). To maintain consistent bandwidth usage per node, we have set the following time-bound parameters:\\
- $T_1 = 16s$\\
- $\lambda = 9s$\\
- $T_{2_{max}} = (3 + 1) \times 9 = 36s$\\
- $\delta = 1s$

Using the calculation formula from Section V.D, we obtain:
$$
UB = \frac{2MB \times 45 + 30 \times 256KB}{T_{2_{max}} - \delta} = 2.78MB/s.
$$

The epoch time for Rapidchain is $E_{time} = 16.2s$, ensuring that both approaches require the same upload bandwidth.

\subsubsection{Simulation for Gearbox.}
\begin{itemize}
    \item \noindent\textbf{Accessing the Simulation.}
Please refer to the file \textit{Gearbox 0930.ipynb} in the Github page for access to the simulation. Note that because it is purely mathematical simulation, consistent to the original experiment in the paper, we show the situation for a system containing 5000 nodes.
\item \noindent \textbf{Hardware Dependencies.}
No specific hardware dependencies are required for this simulation.
\item \noindent \textbf{Software Dependencies}
The simulation does not have any specific software dependencies.
\end{itemize}

\subsubsection{Benchmarks}
We have conducted two benchmarks, measuring transactions per second and storage while considering various attack types for Reticulum, as outlined in the Attack Strategies section of Section VI. We set $\tau = 40$ (as defined in the Threat Model of Section III.A). The benchmarks are defined in the Threat Model of Section III.A  as follows:
- Transaction per second is referred to as ``Throughput.''
- Storage is referred to as ``Storage.''

\subsubsection{Artifact Installation \& Configuration}

Please refer to our GitHub page. If you are using our cloud instance, then run \url{./runall.sh} directly. After the running of the experiment, run the following terminal command to generate the experiment result: \textit{jupyter-nbconvert --to=html --ExecutePreprocessor.enabled=True Figure\ of\ Experiment.ipynb}

\subsubsection{Experiment Workflow}

The experiment follows the following workflow:

\begin{itemize}
    \item  [1.] Each node in the network queries the Drand API to obtain the current random number, determining their placement in process and control shards.
    \item  [2.] Leader nodes for the process and control shards are randomly selected at the beginning of the experiment.
    \item  [3.] Blockchain epochs commence when a control block from the previous blockchain is accepted, signifying the end of the previous epoch.
    \item  [4.] During a blockchain epoch, each process shard generates and votes for process blocks.
    \item  [5.] After the time-bound $T_1$ elapses, the control shard collects process blocks that didn't achieve unanimous voting within $T_1$ and generates a control block.
    \item  [6.] All members within the control shard cast their votes for the control block.
    \item  [7.] The experiment concludes after a pre-defined number of epochs. The benchmark results are presented graphically, akin to those in the research paper.
\end{itemize}

\subsubsection{Major Claims}

\begin{itemize}
    \item \textbf{\textit{Claim 1:}} We propose Reticulum, a pioneering protocol that achieves both liveness and safety while mitigating security vulnerabilities. Reticulum effectively prevents adversarial behaviors, all without the need for dynamic shard respawning or overlapping shard memberships, thus minimizing unnecessary overhead. This claim is substantiated by our design details and proofs, independent of the experiment.
    
    \item \textbf{\textit{Claim 2:}} We conduct a comprehensive performance analysis of Reticulum, comparing it with state-of-the-art approaches. Our empirical evaluation encompasses both simulations and experiments, wherein Reticulum outperforms two leading protocols, Gearbox and Rapidchain, in terms of transaction throughput and storage requirements. We aim to substantiate this claim through the artifact evaluation.
    
    \item \textbf{\textit{Claim 3:}} We have developed and openly shared a fully functional prototype of Reticulum. This implementation addresses the dearth of available block sharding protocol implementations, facilitating comparisons between different protocols through experimentation. Researchers can utilize our open-source prototype as a valuable resource for blockchain sharding research.
\end{itemize}

\subsubsection{Evaluation}
\begin{itemize}
    \item [1.] Download the project code [1 human-minute + 3 compute-minutes].
    \item [2.] Run the project code [1 human-minute + 3200 compute-minutes]. The experiment spans 100 epochs.
    \item [3.] Generate a comparison of results between Rapidchain and Reticulum [1 human-minute + 1 compute-minute].
\end{itemize}

\noindent \textbf{[Results]}
The results showcase benchmarks for one control shard of Reticulum and one shard of Rapidchain. Since we don't test cross-shard transactions, the number of shards determines overall transaction throughput. With an increasing number of nodes, more shards become available. Currently, we have one control shard and three process shards for Reticulum, and only one shard for Rapidchain. When we scale up to 5000 nodes, the test results for both approaches will see significant improvements. However, Rapidchain's performance remains significantly lower compared to Reticulum.
\end{document}